\documentclass[conference]{IEEEtran}
\IEEEoverridecommandlockouts

\usepackage{cite}
\usepackage{amsmath,amssymb,amsfonts}
\usepackage{algorithmic}
\usepackage{graphicx}
\usepackage{textcomp}
\usepackage{xcolor}
\def\BibTeX{{\rm B\kern-.05em{\sc i\kern-.025em b}\kern-.08em
    T\kern-.1667em\lower.7ex\hbox{E}\kern-.125emX}}

\usepackage[utf8]{inputenc}
\usepackage{amsthm}
\usepackage{physics}
\AtBeginDocument{\let\Pr\Probability}
\usepackage{hyperref}
\hypersetup{colorlinks=true,allcolors=black}
\usepackage[capitalize]{cleveref}

\newtheorem{theorem}{Theorem}
\newtheorem{proposition}{Proposition}
\newtheorem{lemma}{Lemma}
\newtheorem{corollary}{Corollary}
\newtheorem{conjecture}{Conjecture}

\def\calG{\mathcal{G}}
\def\calH{\mathcal{H}}
\def\calN{\mathcal{N}}

\def\CC{\mathbb{C}}

\DeclareMathOperator{\Per}{\mathrm{Per}}
\DeclareMathOperator{\Var}{\mathrm{Var}}
\DeclareMathOperator*{\Ex}{\mathbb{E}}
\def\vac{\mathrm{vac}}
\DeclareMathOperator{\poly}{poly}
\newcommand{\floor}[1]{\left\lfloor #1 \right\rfloor}

\def\sharpP{\mathsf{\#P}}
\def\BPP{\mathsf{BPP}}
\def\NP{\mathsf{NP}}
\def\P{\mathsf{P}}

\begin{document}

\title{Weak Permanent Anti-Concentration for Random Gaussian Matrices in Boson Sampling}

\IEEEoverridecommandlockouts

\IEEEpubid{\begin{minipage}{\textwidth}\ \\[48pt]
\copyright~2026 IEEE. Personal use of this material is permitted.  Permission from IEEE must be obtained for all other uses, in any current or future media, including reprinting/republishing this material for advertising or promotional purposes, creating new collective works, for resale or redistribution to servers or lists, or reuse of any copyrighted component of this work in other works.\end{minipage}}

\author{
\IEEEauthorblockN{Fei Meng\IEEEauthorrefmark{1}\IEEEauthorrefmark{2}\IEEEauthorrefmark{3}\textsuperscript{,+},
Bin Cheng\IEEEauthorrefmark{4}\textsuperscript{,+},
Jianan Li\IEEEauthorrefmark{3},
Man-Hong Yung\IEEEauthorrefmark{3}\IEEEauthorrefmark{5}}\\
\IEEEauthorblockA{
\IEEEauthorrefmark{1}University of Glasgow, UK\\
\IEEEauthorrefmark{2}City University of Hong Kong, China\\
\IEEEauthorrefmark{3} Southern University of Science and Technology, China\\
\IEEEauthorrefmark{4} National University of Singapore, Singapore\\
\IEEEauthorrefmark{5}International Quantum Academy, China\\
Email: fei.meng@glasgow.ac.uk, bincheng@nus.edu.sg, 11930493@mail.sustech.edu.cn, yung@iqasz.cn }
}

\maketitle
\bstctlcite{BSTcontrol}
\begingroup\renewcommand\thefootnote{+}
\footnotetext{These authors contributed equally to this work.}
\endgroup

\begin{abstract}
Recent demonstrations of quantum computational advantage have been driven largely by sampling problems.
A prominent model, boson sampling, involves sampling from the output distribution of a linear optical network.
However, its classical hardness hinges on two plausible yet less-studied conjectures: the average-case hardness of approximating Gaussian permanents, and the permanent anti-concentration conjecture (PACC).
The PACC is a purely mathematical assertion regarding the distributional properties of random Gaussian matrices.
While the typical magnitude of the permanent has been established for discrete random matrices, the complex Gaussian case, which governs transition amplitudes in linear optical networks, has remained open.
Here, we establish a weak anti-concentration bound by upper-bounding the probability that a random Gaussian permanent is superexponentially smaller than its standard deviation.
Tightening this bound to an inverse-polynomial fraction would prove the original PACC.
As a corollary, we establish the typical magnitude of Gaussian permanents, on par with Tao and Vu's seminal result for Bernoulli matrices.
Combined with the Aaronson-Arkhipov framework, our result implies that classically simulating boson sampling to within a superexponentially small total variation distance would collapse the polynomial hierarchy, assuming the remaining conjectures hold.
\end{abstract}

\begin{IEEEkeywords}
Boson Sampling, Permanent Anti-concentration, Computational Complexity
\end{IEEEkeywords}

\section{Introduction}\label{sec1}

Quantum computational advantage~\cite{zhu2022quantum,deshpande2022quantum,madsen2022quantum,wu2021strong,zhong2020quantum}, also known as quantum supremacy~\cite{arute2019quantum,boixo2018characterizing,lund2017quantum,terhal2018quantum,villalonga2020establishing,harrow2017quantum,hangleiter2023computational}, is the capability of quantum computing devices to tackle problems that classical computers practically cannot. Several models have been proposed and experimentally explored, including boson sampling~\cite{Aaronson2013,tillmann2013experimental,hamilton2017gaussian, brod2019photonic,spring2013boson,wang2017high,lund2014boson,bentivegna2015experimental,gard2015introduction}, IQP sampling~\cite{shepherd_temporally_2009,bremner2011classical,bremner_average-case_2016,bremner_achieving_2017,bremner2025instantaneous}, and random circuit sampling~\cite{arute2019quantum,zlokapa2023boundaries,bouland2019complexity,zhu2022quantum}.  

Boson sampling, a quantum computational scheme designed for non-interacting bosons, e.g., photons, was initially proposed by Aaronson and Arkhipov as a rudimentary model of quantum computing to provide theoretical evidence of quantum computational advantage ~\cite{Aaronson2013}.   Boson sampling demands less coherent quantum control, requiring only passive linear-optical elements like beam splitters and phase shifters, which makes it experimentally accessible with current photonic technology.
Moreover, Gaussian boson sampling, a variant of boson sampling, finds applications in graph-related problems~\cite{deng2023solving}, such as, sampling dense subgraphs~\cite{arrazola2018using} and finding perfect matchings~\cite{bradler2018gaussian}.

Despite many successful experimental
demonstrations~\cite{brod2019photonic,wang2018toward,wang2017high,spagnolo2014experimental,zhong2019experimental,bentivegna2015experimental,tillmann2013experimental, spring2013boson, wang2019boson, loredo2017boson, broome2013photonic}, the theoretical foundations of boson sampling remain incomplete.
The classical hardness of boson sampling relies on two unproven conjectures, namely, the Gaussian permanent estimation conjecture and the permanent anti-concentration conjecture (PACC)~\cite{Aaronson2013}.
Assuming these two conjectures, one can prove that classically sampling from the output distribution of a boson sampling experiment, even approximately, is intractable, unless the polynomial hierarchy collapses, which is highly implausible.
Recently, the hardness framework has been extended from the dilute regime ($m \gg n^2$) to the experimentally relevant saturated regime ($m = \Theta(n)$), where output collisions are frequent~\cite{bouland2023complexity}; the anti-concentration conjecture remains open in all regimes.

Several approaches to PACC have been pursued.
Aaronson and Arkhipov proved a weak version of anti-concentration by computing the first and second moments of the squared permanent of Gaussian matrices, and suggested that computing higher moments might eventually prove PACC~\cite{Aaronson2013}. Subsequently, Nezami used a hybrid representation-theoretic and combinatorial approach to analyze and lower bound the higher moments~\cite{nezami_permanent_2021}.
Although the lower bounds may be tight, they do not directly imply the PACC, and it has been argued that additional assumptions are necessary to prove the conjecture.
More recently, essentially the same second-moment argument of Aaronson and Arkhipov has been adapted to the saturated regime, yielding analogous weak anti-concentration bounds with the same asymptotic scaling~\cite{mhiri2026boson,kolarovszki2026general}.
For Gaussian boson sampling (GBS), Ehrenberg el al. relate anti-concentration to the ratio of the squared first moment to the second moment of the output probability, where the output probability is determined by the Hafnian, a generalization of the permanent~\cite{ehrenberg_second_2025,ehrenberg_transition_2025}. 
They then develop a graph-theoretic method to compute the first moment and analyze certain properties of the second moment, which enables them to identify a transition from no anti-concentration to weak anti-concentration.
Given the relationship between permanents and Hafnians~\cite{barvinok_combinatorics_2016}, these techniques may also be applicable to standard boson sampling.

In a pure mathematical context, Tao and Vu proved that the permanent of a random Bernoulli matrix has magnitude $n^{(1/2+o(1))n}$ asymptotically almost surely~\cite{tao2009permanent}.
Their technique has been subsequently improved and extended to symmetric and more general discrete random matrices~\cite{kwan_permanent_2022,hunter_exponential_2025}. However, these results concern discrete distributions and do not directly apply to the complex Gaussian matrices relevant to boson sampling.
The complex Gaussian distribution arises naturally in this context: the permanent of a submatrix of a Haar-random unitary matrix determines the transition amplitude of identical photons through a linear optical network~\cite{Aaronson2013}, and when the number of optical modes $m$ greatly exceeds the number of photons $n$, these submatrices are well approximated by i.i.d.\ complex Gaussian matrices~\cite[Theorem~35]{Aaronson2013}. Anti-concentration for this distribution is thus needed to complete the hardness argument for boson sampling.

Recently, assuming PACC, Bouland et al. proved that estimating the output probabilities of random boson sampling experiments to additive error $e^{-n\log n - n - O(n^{\delta})}$ is $\sharpP$-hard~\cite{bouland2025exponential}, substantially narrowing the gap to the Gaussian permanent estimation conjecture.

Here, we take a complementary approach and prove a weak version of permanent anti-concentration by extending the row-exposure framework of Tao and Vu~\cite{tao2009permanent} to random Gaussian matrices.
This extension was suggested in Aaronson and Arkhipov's paper~\cite{Aaronson2013}, but the technical details have not been worked out since then.
The passage from Bernoulli to complex Gaussian requires replacing several tools in the proof:
(i) the Littlewood-Offord-Erd\H{o}s inequality is replaced by an anti-concentration bound for complex Gaussian linear combinations (\cref{littlewoodinequality});
(ii) the combinatorial lower bound on the probability of growing a minor is replaced by an analytic argument exploiting the rotational symmetry of the complex Gaussian distribution (\cref{lemma:large_parent_has_large_child});
and (iii) the Azuma-Hoeffding inequality, which requires bounded increments, is replaced by the McDiarmid inequality to handle the unbounded support of Gaussian variables (\cref{lemma:McDiarmid}).
Furthermore, combined with the framework of Aaronson and Arkhipov~\cite{Aaronson2013}, our weak anti-concentration bound implies that classical simulation of boson sampling to within superexponentially small total variation distance ($O(1/n^{2\alpha n})$) would collapse the polynomial hierarchy, under the Permanent-of-Gaussians Conjecture (\cref{remark:hardness}).

In what follows, we first formulate the conjecture in \cref{sec:preliminary}, and then present our main analytical result in \cref{sec:results}. Then we present the sketch of the proof in \cref{sec:sketch_of_proof}. Finally, we discuss the implications of our results and conclude.

\section{Boson sampling and permanent}
\label{sec:preliminary}

The permanent of an $n \times n$ matrix $M = (m_{ij})$ is defined as the sum over all permutations $\pi \in S_n$ of the product of matrix elements $m_{i,\pi(i)}$:
\begin{equation}
    \Per(M) = \sum _ {\pi \in S_n } \prod_i m_{i,\pi(i)}\, .
\end{equation}
It plays an important role in combinatorics, statistics, and quantum physics.
In quantum physics, it is related to the transition amplitude of identical photons from one configuration to another, under the action of a linear optical network.

Consider a linear optical network with $m$ modes and $n$ photons.
We use $\vb{s} = (s_1, \cdots, s_m)$ to denote a photon configuration of the linear optical network, where $s_i$ denotes the number of photons in the $i$-th mode.
The corresponding Fock state is $\ket{\vb{s}} := \ket{s_1} \otimes \cdots \otimes \ket{s_m}$, where $\ket{s_i} := (s_i!)^{-1/2} (a_i^{\dagger})^{s_i} \ket{\vac}$ and $a_i^{\dagger}$ is the bosonic creation operator for the $i$-th mode.
Since photons are neither created nor destroyed in the linear optical network, the set of possible configurations is $\Phi_{m, n} := \{ (s_1, \cdots, s_m) : \sum_{i=1}^m s_i = n\}$ and the associated Hilbert space is denoted as $\calH_{m, n}$.
The general quantum state in this Hilbert space is given by a unit vector
\begin{align}
    \ket{\psi} = \sum_{\vb{s} \in \Phi_{m, n}} \alpha_{\vb{s}} \ket{\vb{s}} \ ,
\end{align}
satisfying $\sum_{\vb{s} \in \Phi_{m, n}} |\alpha_{\vb{s}}|^2 = 1$.

The linear optical network can be modelled by an $m\times m$ unitary matrix $U = (u_{ij})$. 
Its unitary transformation in the Hilbert space $\calH_{m, n}$, denoted as $\varphi(U)$, acts as a linear transformation on the bosonic modes,
\begin{align}
    \varphi(U) a_i^{\dagger} \varphi(U)^{\dagger} = \sum_{j=1}^m u_{ij} a_j^{\dagger} \ .
\end{align}
Under the action of $U$, the transition amplitude from $\ket{\vb{t}}$ to $\ket{\vb{s}}$ is given by~\cite[Theorem~3.10]{Aaronson2013}
\begin{align}
    \mel{\vb{s}}{\varphi(U)}{\vb{t}} = \frac{\Per(U_{\vb{s}, \vb{t}})}{\sqrt{s_1! \cdots s_m! t_1! \cdots t_m!}} \ ,
\end{align}
when $\vb{s}, \vb{t} \in \Phi_{m, n}$.
Here, $U_{\vb{s}, \vb{t}}$ is an $n\times n$ submatrix of $U$ constructed in the following way.
First, take $s_i$ copies of the $i$-th rows of $U$ to form $U_{\vb{s}}$.
Then, take $t_j$ copies of the $j$-th row of $U_{\vb{s}}$ to form $U_{\vb{s}, \vb{t}}$.

For boson sampling, the standard initial state $\ket{\vb{t}}$ is given by $\ket{1_n} := \ket{1, \cdots, 1, 0, \cdots, 0}$, which is the configuration with a single photon in the first $n$ modes and no photon in the others.
Then, the probability of observing the configuration $\vb{s} = (s_1, \cdots, s_m)$ is
\begin{align}
\label{eq:boson_sampling_probability}
    \Pr[\vb{s}] = |\alpha_{\vb{s}}|^2 = \frac{|\Per(U_{\vb{s}, 1_n})|^2}{s_1! \cdots s_m!} \ .
\end{align}
The problem of boson sampling is to sample from the above probability distribution, given a unitary matrix $U$ as input, which describes a linear optical network.

When the linear optical network is randomly constructed, one expects that $U$ is approximately Haar random. 
Moreover, it is argued that after scaled up by a factor $\sqrt{m}$, the submatrix $U_{\vb{s}, 1_n}$ taken from a Haar random $U \in \mathcal{H}_{m, m}$ is approximately a matrix sampled from $\calG^{n \times n}$~\cite[Sec.~5.1]{Aaronson2013}.
Here, $\calG$ denotes the standard complex Gaussian distribution with mean 0 and variance $\Ex_{z \sim \calG} [|z|^2] = 1$, whose real and imaginary parts are independent and each follow a normal distribution with mean $0$ and standard deviation $1/2$.
A matrix from $\calG^{n \times n}$, called \emph{Gaussian matrix}, has its entries sampled i.i.d. from $\calG$.
For this reason, below we consider a matrix $M \sim \calG^{n \times n}$ instead of $U_{\vb{s}, 1_n}$.
By direct calculation, we can show the expectation value and the variance of $\Per(M)$ when $M$ is a random Gaussian matrix,
\begin{gather}
    \Ex\limits_{M \sim \calG^{n\times n}}[\Per(M)] = 0 \label{eq:perm_exp} \\
    \Var(\Per(M)) = \Ex\limits_{M \sim \calG^{n\times n}}[|\Per(M)|^2] = n! \ .\label{eq:perm_var}
\end{gather}

The definition of permanent is similar to that of the determinant, except that all the signs in the summation are positive.
Despite its seemingly simpler definition, the computation of the permanent is actually $\sharpP$-hard~\cite{valiant1979complexity}, which means that it is unlikely to be efficiently calculated on a classical computer, in contrast to the determinant.
Naively, this computational complexity suggests that sampling from the probability distribution of a linear optical network may be classically hard.
On the other hand, one can efficiently  sample from the probability distribution defined in \cref{eq:boson_sampling_probability} on a quantum computer by running the corresponding linear optical network and performing photon-number measurements.
This makes boson sampling a natural candidate for demonstrating quantum computational advantage.

More precisely, the classical hardness of boson sampling relies on the following three conjectures~\cite{Aaronson2013}. 
Namely, (a) the Polynomial Hierarchy does not collapse; (b) it is $\sharpP$-hard to approximate the permanent of Gaussian matrices on average; (c) the permanent of Gaussian matrices satisfies the anti-concentration bound.
While the non-collapse of Polynomial Hierarchy is widely accepted, the last two conjectures are less-studied.
In this work, we focus on the anti-concentration conjecture, whose mathematical statement is as follows.

\begin{conjecture}[Permanent anti-concentration~\cite{Aaronson2013}]
\label{conj:PACC}
    There exists a polynomial $p$, such that for all $n$ and $\delta >0$,
    \begin{equation}
        \Pr_{M \sim \calG^{n \times n}} \left( |\Per(M)| < \frac{\sqrt{n!}}{p(n,1/\delta)} \right) < \delta \, .
    \end{equation}
\end{conjecture}

Equivalently, \cref{conj:PACC} states that there exist positive constants $C, D$, and $\beta$, such that for all $n$ and $\epsilon > 0$,
\begin{equation}\label{eq:PACC_original}
\underset{M \sim \calG^{n \times n}}{\Pr}
\left( |  \Per (M) | < \epsilon {\sqrt{n!}} \right) < C n^D \epsilon^\beta \, .
\end{equation}

\section{Weak anti-concentration bound}\label{sec:results}

In this work, we extend the results reported in Ref.~\cite{tao2009permanent} for random Bernoulli matrices to derive a weak anti-concentration bound for random Gaussian matrices. 
Specifically, we prove the following weak permanent anti-concentration theorem.

\begin{theorem}[Weak permanent anti-concentration]
\label{thm:weakPACC}
Let $M \sim \calG^{n \times n}$ be a random Gaussian matrix.
Then, there exists a constant $c > 0$ such that for all $\alpha > 0$ and $n$ sufficiently large depending on $\alpha$,
\begin{equation}
\Pr_{M \sim \calG^{n \times n}} \left( | \Per (M) | < \frac{\sqrt{n!}}{n^{\alpha n}} \right) =
O \left( \frac{1}{n^c} \right) \, .
\end{equation}
\end{theorem}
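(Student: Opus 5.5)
The plan is to adapt the row-exposure argument of Tao and Vu to complex Gaussian entries. Write $M_k$ for the $k\times n$ matrix formed by the first $k$ rows of $M$. For each $k$-subset $S$ of columns, let $\Per(M_{k,S})$ denote the permanent of the $k\times k$ minor on rows $1,\dots,k$ and columns $S$. Laplace expansion along row $k+1$ gives
\begin{equation*}
\Per(M_{k+1,S'}) = \sum_{j\in S'} m_{k+1,j}\,\Per(M_{k,S'\setminus\{j\}}).
\end{equation*}
This is a linear combination of i.i.d.\ $\calG$ variables, with coefficients that are determined by the rows already exposed. Conditioned on $M_k$, it is therefore exactly complex Gaussian with variance $\sum_{j}|\Per(M_{k,S'\setminus\{j\}})|^2$.

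This is the Gaussian replacement for Littlewood--Offord (\cref{littlewoodinequality}). It gives a clean tail bound: for any $\lambda$, the new minor has modulus below $\lambda$ times the $\ell^2$ norm of its parent minors with probability $O(\lambda^2)$. There is no combinatorial loss here, which is cleaner than the Bernoulli case.

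I would track two quantities as $k$ grows.
\begin{itemize}
\item[(i)] The number $N_k$ of ``large'' $k$-minors, where large means $|\Per(M_{k,S})|\ge \sqrt{k!}\,n^{-\alpha' k/n\cdot n}$ for a suitably chosen threshold $\alpha'<\alpha$ depending on $\alpha$.
\item[(ii)] A good event on which a positive fraction of the minors remain large.
\end{itemize}

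The analogue of Tao--Vu's ``large parent has large child'' step is \cref{lemma:large_parent_has_large_child}. Given one large $k$-minor, exposing row $k+1$ produces, with probability bounded below, many large $(k+1)$-minors. I would prove this by rotational symmetry. Conditioned on $M_k$, the vector of $(k+1)$-minors is a linear image of the Gaussian row $m_{k+1}$. Its $\ell^2$ mass is at least $\sum_{S'\supset S}|\Per(M_{k,S})|^2$, summed over the $n-k$ extensions. A second-moment/Paley--Zygmund bound on the count of extensions exceeding a constant fraction of the typical size then shows that, with constant probability, at least a constant fraction of the $n-k$ children are large.

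Iterating this for $k$ up to $n-n^{\epsilon}$ would give, with high probability, exponentially many large minors of size about $n-n^\epsilon$. The martingale (concentration) control of $\log N_k$ over the $n$ steps is where Tao--Vu use Azuma. Here the increments are unbounded, so I would truncate entries at $|m_{ij}|\le n^{c_0}$, which fails with superpolynomially small probability, and apply McDiarmid (\cref{lemma:McDiarmid}) to the truncated product measure.

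The final $n^{\epsilon}$ rows form the endgame. Here I would use the Gaussian anti-concentration bound directly: each step loses at most a factor $\lambda$ with failure probability $O(\lambda^2)$. Choosing $\lambda = n^{-\alpha/2}$-type losses per row, and accumulating over the $n^{\epsilon}$ rows together with the $n^{\alpha' n}$ slack from the bulk phase, the final permanent is at least $\sqrt{n!}\,n^{-\alpha n}$ except with probability $O(n^{\epsilon}\lambda^2)+(\text{bulk failure})=O(n^{-c})$.

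I expect the main obstacle to be the middle phase, where $k$ is close to $n$. There the number of extensions $n-k$ is small, so the ``constant fraction of children are large'' statement degenerates, and the failure probabilities per step must be summed to something polynomially small. This is exactly why the conclusion is only $O(n^{-c})$ rather than $\delta$-dependent. I would first try to mimic Tao--Vu's choice of cutoff $k_0=n-n^{\epsilon}$ and balance $\epsilon$ against the McDiarmid deviation, accepting a small constant $c$.
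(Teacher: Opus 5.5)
\paragraph{Gap in the bulk phase.} Your key step is that one large $k$-minor yields, with probability bounded below, many large $(k+1)$-minors. This step fails for your threshold $\tau_k=\sqrt{k!}\,n^{-\alpha' k}$. The required gain per row is $\tau_{k+1}/\tau_k=\sqrt{k+1}\,n^{-\alpha'}$, which tends to infinity once $k\gg n^{2\alpha'}$. A single heavy parent, however, only guarantees that each child has conditional variance at least $\tau_k^2$. If the child's other $k$ parents are small, it exceeds $\tau_{k+1}$ with probability about $\exp(-(k+1)n^{-2\alpha'})$.

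Your Paley--Zygmund argument shows that a constant fraction of the children exceed a constant times their own standard deviation. That standard deviation is only guaranteed to be at least $\tau_k$, not $\tau_{k+1}$. Iterating therefore only keeps the modulus from shrinking by more than a constant factor per row, and never produces the $\sqrt{n!}$ growth. To gain a factor like $n^{1/2-c}$, a child needs on the order of $n^{1-c}$ heavy parents, and nothing in your plan forces that. Asking that a positive fraction of \emph{all} minors stay large is essentially the theorem itself: naive averaging lets the fraction halve each row, after which the exceedance probability collapses.

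Two related points also fail. A per-row success probability that is merely bounded below cannot be iterated over $\Theta(n)$ rows. A McDiarmid bound on $\log N_k$ after truncation has no usable Lipschitz constant, since one entry moves many minors at once and $\log N_k$ can collapse in a single row.

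\paragraph{The missing idea.} What you need is the Tao--Vu dichotomy, \cref{prop:growingmanylargeminors}. Double counting the parent--child graph of the $N$ heavy $k$-minors gives two cases:
\begin{itemize}
\item Many children have fewer than $K\approx\epsilon\gamma n^{1-c}$ heavy parents. Then with probability at least $1/3$ the count jumps to $n^cN$ at the same threshold.
\item Many children have at least $K$ heavy parents. Then \cref{littlewoodinequality} raises the threshold by $n^{1/2-c}$ with probability $1-n^{-c/4}$.
\end{itemize}
Because the count never exceeds $2^n$, population jumps occur only $O(n/\log n)$ times, so almost every row must be a magnitude jump. The weighted potential $W_k$ (weight $3$ against the $1/3$ success probability) makes this rigorous. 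It is a supermartingale whose increments are bounded by construction, so Azuma suffices and no truncation is needed in the bulk (\cref{prop:success_prob}).

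\paragraph{Your endgame.} This part is correct, and simpler than the paper's \cref{endgame}. Along a single chain, each child has conditional variance at least the squared modulus of its parent. Losing a factor $n$ per row therefore fails with probability at most $n^{-2}$. You can even run this over all $\epsilon n$ final rows, with loss $n^{-\epsilon n}$ and failure probability $O(1/n)$, avoiding complement-disjoint minors and McDiarmid altogether.

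Take the per-row loss to be $1/n$ rather than $n^{-\alpha/2}$. Otherwise your exponent $c$ depends on $\alpha$, whereas the theorem asks for a single $c$. This also shows the real obstacle is magnitude growth in the bulk, not the rows with $k$ close to $n$.
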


Compared to \cref{conj:PACC}, \cref{thm:weakPACC} is a weak version because it only upper bounds the probability that $|\Per(M)|$ is smaller than $1/n^{\alpha n}$ fraction of its standard deviation (i.e., $\sqrt{n!}$), instead of a polynomially small fraction.
Since $1/n^{\alpha n}$ is asymptotically smaller than $1/\poly(n, 1/\delta)$, the cumulative probability bound in \cref{thm:weakPACC} does not suffice to imply the bound required for PACC.

A corollary of \cref{thm:weakPACC} is the typical value of the permanent of a random Gaussian matrix.
Using Stirling's approximation, we have $n! = n^{(1 + o(1)) n}$.
Then, we have the following corollary.

\begin{corollary}[Typical value of Gaussian permanents]
\label{coro:typicalvalue}
    For a random matrix $M \sim \calG^{n \times n}$, we have 
    \begin{equation}
        | \Per (M) | = n^{( \frac{1}{2} + o (1) ) n} \, ,
    \end{equation}
    asymptotically almost surely, i.e., with probability $1 - o (1)$.
\end{corollary}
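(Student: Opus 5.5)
The statement to prove is \cref{coro:typicalvalue}: a two-sided bound, $|\Per(M)| \le n^{(1/2+\epsilon)n}$ and $|\Per(M)| \ge n^{(1/2-\epsilon)n}$ with probability $1-o(1)$ for every fixed $\epsilon>0$. The plan is to get the upper bound from the second moment in \cref{eq:perm_var} via Markov's inequality, and the lower bound directly from \cref{thm:weakPACC}. Stirling's approximation $n! = n^{(1+o(1))n}$ then converts both into the stated exponent.

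\textbf{Upper tail.} Fix $\epsilon>0$. By \cref{eq:perm_var}, $\Ex|\Per(M)|^2 = n!$, so Markov's inequality applied to $|\Per(M)|^2$ gives
\begin{equation*}
\Pr\left(|\Per(M)| \ge n^{\epsilon n/2}\sqrt{n!}\right) \le n^{-\epsilon n} = o(1).
\end{equation*}
Since $\sqrt{n!} = n^{(1/2+o(1))n}$, for $n$ large we have $n^{\epsilon n/2}\sqrt{n!} \le n^{(1/2+\epsilon)n}$. Hence $|\Per(M)| \le n^{(1/2+\epsilon)n}$ with probability $1-o(1)$.

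\textbf{Lower tail.} Apply \cref{thm:weakPACC} with $\alpha=\epsilon/2$. For $n$ sufficiently large depending on $\epsilon$, it gives
\begin{equation*}
\Pr\left(|\Per(M)| < \sqrt{n!}\, n^{-\epsilon n/2}\right) = O(n^{-c}) = o(1),
\end{equation*}
because the constant $c>0$ does not depend on $\alpha$. Again by Stirling, for large $n$ we have $\sqrt{n!}\, n^{-\epsilon n/2} \ge n^{(1/2-\epsilon)n}$, since the $o(1)$ in the exponent of $\sqrt{n!}$ is eventually below $\epsilon/2$. So $|\Per(M)| \ge n^{(1/2-\epsilon)n}$ with probability $1-o(1)$.

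\textbf{Combining.} A union bound over the two tails shows that for each fixed $\epsilon$, $\Pr\left(\left|\log_n|\Per(M)|/n - 1/2\right| > \epsilon\right) \to 0$. To pass from ``for every fixed $\epsilon$'' to a single $o(1)$ in the exponent, use a standard diagonal argument. Take $\epsilon_k = 1/k$ and thresholds $N_k$ beyond which the failure probability is below $1/k$. Setting $\epsilon(n) = 1/k$ for $N_k \le n < N_{k+1}$ gives a sequence $\epsilon(n)\to 0$ such that $|\Per(M)| = n^{(1/2\pm\epsilon(n))n}$ with probability $1-o(1)$.

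All the substantive work sits inside \cref{thm:weakPACC}, which we take as given. The only delicate points are checking that $c$ is uniform in $\alpha$, so that the lower-tail bound is genuinely $o(1)$, and carrying out the diagonalization that yields one $o(1)$ function; neither is a real obstacle.
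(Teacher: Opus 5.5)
Your proposal is correct and matches the paper's proof: the lower tail comes from \cref{thm:weakPACC}, and the upper tail comes from the second moment $\Ex|\Per(M)|^2 = n!$ (the paper calls it Chebyshev, which is exactly your Markov bound on $|\Per(M)|^2$), with Stirling absorbing the constants into the exponent. Your diagonalization is extra rigor that the paper leaves implicit, but the uniformity of $c$ in $\alpha$ is not actually needed, since for each fixed $\epsilon$ any bound $O(n^{-c(\alpha)})$ with $c(\alpha)>0$ is already $o(1)$.
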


\begin{proof}
    An equivalent formulation of the corollary is to show that for all $\alpha > 0$, we have
    \begin{align}
        \Pr(n^{(1/2-\alpha) n} \leq |\Per(M)| \leq n^{(1/2+\alpha) n}) = 1 - o(1) \, .
    \end{align}
    \cref{thm:weakPACC} shows that the probability of $|\Per(M)| \geq n^{(1/2 + o(1) - \alpha) n}$ is $1 - o(1)$ for all $\alpha > 0$, which gives the side of the lower bound.
    To establish the upper bound, one can use the Chebyshev's inequality.
    Recall from \cref{eq:perm_exp,eq:perm_var} that the expectation and the variance of $\Per(M)$ are $0$ and $n!$, respectively.
    Using Chebyshev's inequality, one can show that the probability of $|\Per(M)| \leq k \sqrt{n!}$ is at least $1 - 1/k^2$.
    Taking $k$ to be $n^{\alpha n}$ proves the upper bound.
\end{proof}

\subsection{Implication for classical hardness of boson sampling}
\label{remark:hardness}

Our weak anti-concentration bound can be combined with the framework of Aaronson and Arkhipov~\cite{Aaronson2013} to yield a conditional hardness result, which we sketch in the following.
By Theorem~1.3 of Ref.~\cite{Aaronson2013}, if a classical algorithm samples from the boson sampling distribution to within total variation distance $\delta$, then $|\Per(M)|^2$ can be estimated to within additive error $O(\delta \cdot n!)$ in $\BPP^{\NP}$, which is within the third level of polynomial hierarchy.
\cref{thm:weakPACC} guarantees $|\Per(M)|^2 \ge n!/n^{2\alpha n}$ with probability $1 - O(n^{-c})$, so this additive error translates to a multiplicative error of $O(\delta \cdot n^{2\alpha n})$ on $|\Per(M)|^2$.
For the multiplicative error to be at most constant, one needs $\delta = O(1/n^{2\alpha n})$. Following the triangulation argument of Theorem~1.7 in Ref.~\cite{Aaronson2013}, this yields an $O(1)$ multiplicative error estimate of $\Per(M)$ in $\BPP^{\NP}$.
Under the Permanent-of-Gaussians Conjecture~\cite{Aaronson2013}, which asserts that $\poly(n)$ multiplicative error estimation of $\Per(M)$ is $\sharpP$-hard on average, one concludes $\P^{\sharpP} \subseteq \BPP^{\NP}$.
By the Toda's theorem~\cite{toda_pp_1991}, this implies the collapse of the polynomial hierarchy.

In summary, classical simulation of boson sampling to within $O(1/n^{2\alpha n})$ total variation distance, combined with the Permanent-of-Gaussians Conjecture, would collapse the polynomial hierarchy. This accuracy threshold is far more stringent than the $1/\poly(n)$ that would follow from the full PACC.

\section{Sketch of the proof}
\label{sec:sketch_of_proof}

To prove \cref{thm:weakPACC}, we adopt the row-exposure strategy of Tao and Vu~\cite{tao2009permanent}.
Specifically, we reveal the rows of the Gaussian matrix $M$ one at a time and track how the permanents of small submatrices grow into a lower bound on $\Per(M)$ itself.
In what follows, we will sketch the proof and leave the details to Appendices.

The key mechanism we rely on is the cofactor expansion.
When the $(k+1)$-th row $\vec{X}_{k+1}$ is exposed, the permanent of a $(k+1) \times (k+1)$ minor $M_{A \cup \{i\}}$ decomposes as
\begin{small}
\begin{equation}\label{eq:cofactor_sketch}
    \Per(M_{A \cup \{i\}}) = m_{k+1, i} \Per(M_A) + \sum_{j \in A} m_{k+1, j} \Per(M_{A \cup \{i\} \setminus \{j\}}) \, ,
\end{equation}
\end{small}
where $m_{k+1,i}$ and $m_{k+1,j}$ are i.i.d.\ entries from $\calG$ and $A \subset [n] := \{1,2,\ldots,n\}$ has size $k$.
Conditioned on the first $k$ rows, the right-hand side is a complex Gaussian random variable whose variance is at least $|\Per(M_A)|^2$.
By the rotational symmetry of the complex Gaussian distribution, the probability that $|\Per(M_{A\cup\{i\}})| \geq |\Per(M_A)|$ is at least $1/e$  (\cref{lemma:large_parent_has_large_child}).
This is the elementary step: each row exposure has a constant probability of producing a child minor whose permanent is at least as large as its parent's.

We now set up the notation for the induction.
Let $M^{(k)}$ denote the first $k$ rows of $M$.
The $k \times k$ \emph{minor} $M_A$ is the submatrix formed by the first $k$ rows and columns indexed by $A$.
We say $A$ is \textit{$\lambda$-heavy} if $|\Per(M_A)| \geq \lambda$, and let $E_{k, N, \lambda}$ denote the event that at least $N$ of the $\binom{n}{k}$ minors of size $k$ are $\lambda$-heavy. (Note that during our iterative growth steps, the parameter $N$ may take non-integer real values, such as $\epsilon N_k / 6$. In such cases, $E_{k, N, \lambda}$ mathematically requires the integer count of such minors to be at least $\lceil N \rceil$. We permit $N$ to be a real number purely to streamline the algebraic recurrences without cluttering the notation with ceiling functions at every step.)
By \cref{prop:prob_z_circle}, each $1 \times 1$ minor $m_{1,i}$ satisfies $|m_{1,i}| \geq 1$ with probability $1 - 1/\eta$, where $\eta := e/(e-1) \approx 1.582$, so
\begin{equation}\label{eq:initial_event}
    \Pr(E_{1, 1, 1}) > 1 - \eta^{-n} \, .
\end{equation}
To prove \cref{thm:weakPACC}, it suffices to show
\begin{equation}\label{eq:bound_event_prob}
    \Pr(E_{n, 1, \sqrt{n!}/n^{\alpha n}}) \geq 1 - O(n^{-\Omega(1)}) \, ,
\end{equation}
for all $\alpha > 0$ and $n$ sufficiently large.

The proof grows from \cref{eq:initial_event} to \cref{eq:bound_event_prob} by inductively increasing $k$ from $1$ to $n$.
At each step, we track two quantities: the number $N$ of $\lambda$-heavy minors and the magnitude threshold $\lambda$.
Two lemmas control the induction.
The first maintains the count of heavy minors as $k$ increases:

\begin{lemma}[Maintaining many large minors]
\label{maintainingmanylargeminors_main}
    Let $1 \le k \le (1 - \epsilon)n$ for some $\epsilon > 0$, $N \ge 1$, and $\lambda > 0$. Then
    \begin{equation}
        \Pr(E_{k+1,\, \epsilon N/6,\, \lambda} \mid E_{k,N,\lambda}) \ge 1 - 2\exp(-\Omega(\epsilon n)) \, .
    \end{equation}
\end{lemma}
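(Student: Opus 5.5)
Condition on the first $k$ rows $M^{(k)}$ and assume $E_{k,N,\lambda}$ holds. Fix a family $\mathcal{A}$ of exactly $\lceil N\rceil$ $\lambda$-heavy $k$-subsets $A\subset[n]$. The only fresh randomness is the row $\vec X_{k+1}$. For each $A\in\mathcal{A}$ and each column $i\notin A$ there are at least $n-k\ge \epsilon n$ choices. By the cofactor expansion \eqref{eq:cofactor_sketch} and \cref{lemma:large_parent_has_large_child}, each child $A\cup\{i\}$ satisfies $|\Per(M_{A\cup\{i\}})|\ge\lambda$ with conditional probability at least $1/e$. Let $Y$ be the number of \emph{distinct} $\lambda$-heavy $(k+1)$-subsets. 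The plan is to (i) lower bound $\Ex[Y\mid M^{(k)}]$ and (ii) show that $Y$ concentrates, via \cref{lemma:McDiarmid}, as a function of the $n$ independent entries $m_{k+1,1},\dots,m_{k+1,n}$.

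\textbf{Step 1: the expectation.} Count pairs $(A,i)$ with $A\in\mathcal{A}$, $i\notin A$ and $A\cup\{i\}$ heavy. The expected number of such pairs is at least $\lceil N\rceil \epsilon n/e$. Each $(k+1)$-set $B$ arises from at most $k+1\le n$ pairs, one for each way of removing an element. Hence
\[
\Ex[Y\mid M^{(k)}]\ge \frac{\epsilon N}{e}\cdot\frac{n}{k+1}\ge \frac{\epsilon N}{e}.
\]
This bound is already above $\epsilon N/3$, so it leaves room for a deviation of size $\epsilon N/6$.

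\textbf{Step 2: concentration, the main obstacle.} The difficulty is that $Y$ need not have bounded differences, since changing one entry $m_{k+1,j}$ can change many minors at once. In Tao--Vu's argument the entry takes only $\pm1$ values, so Azuma applies directly; here the entries are unbounded.

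My first attempt will be to swap the roles of sets and columns. Group the pairs $(A,i)$ by the new column $i$, and define $Z_i$ as the indicator that column $i$ "helps," in the sense that at least $\epsilon N/(2e)$ of the sets $A\in\mathcal{A}$ not containing $i$ satisfy $|m_{k+1,i}\Per(M_A)+\sum_{j\in A} m_{k+1,j}\Per(M_{A\cup\{i\}\setminus\{j\}})|\ge\lambda$. This still couples the columns.

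A cleaner route is a two-stage conditioning. By rotational symmetry, write $m_{k+1,i}=r_i e^{i\theta_i}$ with phases $\theta_i$ independent of everything else. With the moduli and all other entries fixed, choosing $\theta_i$ appropriately makes the heaviness of each child $A\cup\{i\}$ hold with probability at least $1/e$. We then apply McDiarmid to $Y$ as a function of the $n$ independent entries. The bounded-difference constant is controlled by how many heavy $(k+1)$-sets contain a given column $j$, which is at most $Y(k+1)/n$ on average. Bounding this worst case is precisely what \cref{lemma:McDiarmid} (the paper's variant, presumably allowing a small failure event of bad coordinates) is meant to handle.

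I would normalize by considering $Y/N$. We then need each coordinate to change $Y$ by at most $O(N/n)$ outside a bad event of probability $\exp(-\Omega(\epsilon n))$. McDiarmid then gives deviation $\epsilon N/6$ with probability at least $1-\exp(-\Omega(\epsilon^2 n))$, and tightening the bounded-difference constant to $O(\sqrt{\epsilon}N/n)$ would yield the stated $\exp(-\Omega(\epsilon n))$. The factor $2$ in the statement absorbs the bad event.

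\textbf{Step 3: conclusion.} Combining the two steps, $Y\ge \epsilon N/e-\epsilon N/6\ge\epsilon N/6$ except with probability $2\exp(-\Omega(\epsilon n))$. Integrating over $M^{(k)}$ in $E_{k,N,\lambda}$ gives the lemma.

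I expect Step 2 to be the crux. Specifically, the hard part is controlling how a single unbounded entry affects many children simultaneously, and making the bounded-difference constants small enough that the exponent is $\Omega(\epsilon n)$ rather than something weaker.
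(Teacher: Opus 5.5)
Your Step 1 is correct, but the proof rests on Step 2, and the route you sketch for it cannot be completed: the Lipschitz constants of $Y$ in the row entries are nowhere near $O(N/n)$.

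Let $\mathcal{C}$ be the family of $(k+1)$-sets that $Y$ counts, for example the children $A\cup\{i\}$ with $A\in\mathcal{A}$. \cref{lemma:McDiarmid} takes worst-case differences, and the entries range over all of $\CC$. So set row $k+1$ to zero, which gives $Y=0$, and then change only $m_{k+1,j}$ to a large value $R$. Every $B\in\mathcal{C}$ with $j\in B$ then has $\Per(M_B)=R\,\Per(M_{B\setminus\{j\}})$. This is almost surely nonzero, so it exceeds $\lambda$ once $R$ is large.

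Hence $\sigma_j\ge\#\{B\in\mathcal{C}: j\in B\}$. Summing gives $\sum_j\sigma_j\ge(k+1)|\mathcal{C}|$, and Cauchy--Schwarz gives $\sum_j\sigma_j^2\ge(k+1)^2|\mathcal{C}|^2/n$. The deviation you must control is at most $\Ex[Y]\le|\mathcal{C}|$. So the McDiarmid exponent is at most $2n/(k+1)^2=O(1/(\epsilon^2 n))$ when $k\ge\epsilon n$, which is exactly the regime where the lemma is used, and the bound is vacuous. Your own heuristic $Y(k+1)/n$ already exceeds the target $O(N/n)$ by a factor of order $\epsilon k$.

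The difficulty is structural, not a matter of unbounded entries: one coordinate enters every child containing its column. In addition, the paper's \cref{lemma:McDiarmid} has no provision for a bad event, and the factor $2$ in the statement comes from somewhere else. The phase remark does not help either, since the phases are random, not chosen.

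The missing idea is that you only need a lower tail. You can get it without concentrating $Y$ at all, by working parent by parent and aggregating with Markov.
\begin{itemize}
\item Fix a heavy $A$ and additionally condition on $m_{k+1,l}$ for $l\in A$. The children $A\cup\{i\}$, $i\notin A$, then depend on distinct independent entries $m_{k+1,i}$. Each is $\lambda$-heavy with conditional probability at least $1/e$, because the proof of \cref{lemma:large_parent_has_large_child} conditions on everything except $m_{k+1,i}$.
\item Chernoff then gives at least $\epsilon n/3$ heavy children except with probability $\exp(-\Omega(\epsilon n))$; this is \cref{lemma:growing_many_large_minors}.
\item Dependence between different parents is harmless. \cref{lemma:first_moment} with $c=1/2$ is just Markov's inequality for the number of failing parents. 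It shows that at least $N/2$ parents succeed with probability at least $1-2\exp(-\Omega(\epsilon n))$, which is where the factor $2$ comes from.
\item A deterministic double count finishes: each $(k+1)$-set has at most $k+1\le n$ parents, so there are at least $(N/2)(\epsilon n/3)/n=\epsilon N/6$ distinct $\lambda$-heavy children.
\end{itemize}
This is the paper's proof, and it does not need your expectation bound from Step 1.
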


The second lemma allows us to improve either $N$ or $\lambda$:

\begin{lemma}[Growing many large minors]
\label{lemma:growingmanylargeminors_main}
    Under the same conditions as \cref{maintainingmanylargeminors_main}, with an additional parameter $0 < c < 1$, we can partition $E_{k,N,\lambda}$ into two exclusive events $E'_{k,N,\lambda,c}$ and $E''_{k,N,\lambda,c}$ (depending only on the first $k$ rows) such that
    \begin{equation}
        \Pr(E_{k+1,\, n^c N,\, \lambda} \mid E'_{k,N,\lambda,c}) \ge \frac{1}{3}
    \end{equation}
    and
    \begin{equation}
        \Pr(E_{k+1,\, \epsilon N/4,\, n^{1/2-c}\lambda} \mid E''_{k,N,\lambda,c}) \ge 1 - n^{-c/4} \, .
    \end{equation}
\end{lemma}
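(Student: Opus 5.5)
The plan is to follow Tao and Vu's double-counting dichotomy, now driven by the Gaussian tools: \cref{lemma:large_parent_has_large_child} and the anti-concentration bound \cref{littlewoodinequality}. Condition on the first $k$ rows and assume $E_{k,N,\lambda}$ holds. Call a $(k+1)$-set $B\subset[n]$ a \emph{child} of a $k$-set $A$ if $A\subset B$. Each $\lambda$-heavy $A$ has $n-k\ge \epsilon n$ children, so the number of pairs $(A,B)$ with $A$ heavy and $A\subset B$ is at least $\epsilon n N$. For a fixed $B$, the cofactor expansion \eqref{eq:cofactor_sketch} writes $\Per(M_B)=\sum_{j\in B} m_{k+1,j}\Per(M_{B\setminus\{j\}})$. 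Conditioned on $M^{(k)}$, this is a centered complex Gaussian with variance $\sum_{j\in B}|\Per(M_{B\setminus\{j\}})|^2$, which is at least $h_B\lambda^2$, where $h_B$ is the number of heavy $k$-subsets of $B$.

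Fix a large constant $K$ (depending only on the target probability $1/3$). Define $E'_{k,N,\lambda,c}$ as the event, measurable in the first $k$ rows, that at least $K n^c N$ sets $B$ have $h_B\ge 1$, and let $E''$ be its complement within $E_{k,N,\lambda}$.

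\textbf{Case $E'$.} Let $\mathcal B$ be the collection of sets $B$ with $h_B\ge1$, so $|\mathcal B|=T\ge Kn^cN$. By \cref{lemma:large_parent_has_large_child}, each $B\in\mathcal B$ is $\lambda$-heavy with probability at least $1/e$, so the count $Y$ of $\lambda$-heavy sets in $\mathcal B$ has $\Ex Y\ge T/e$ and $Y\le T$. Reverse Markov then gives
\[
\Pr(Y\ge n^cN)\ \ge\ \frac{1/e-1/K}{1-1/K}.
\]
Since $1/e>1/3$, this is at least $1/3$ once $K$ is large. This yields $E_{k+1,n^cN,\lambda}$ with probability at least $1/3$.

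\textbf{Case $E''$.} Fewer than $Kn^cN$ sets $B$ have $h_B\ge1$, but they carry at least $\epsilon nN$ pairs. Call $B$ \emph{rich} if $h_B\ge m:=\epsilon n^{1-c}/(2K)$. Non-rich sets contribute fewer than $Kn^cN\cdot m=\epsilon nN/2$ pairs, so rich sets carry at least $\epsilon nN/2$ pairs. Since $h_B\le k+1\le n$, there are at least $\epsilon N/2$ rich sets. For a rich $B$, $\Per(M_B)$ is complex Gaussian with variance at least $m\lambda^2$. Since $\Pr(|Z|<t)\le t^2/\sigma^2$ for complex Gaussian $Z$ (this is \cref{littlewoodinequality}),
\[
\Pr\bigl(|\Per(M_B)|<n^{1/2-c}\lambda\bigr)\le \frac{n^{1-2c}\lambda^2}{m\lambda^2}=\frac{2K}{\epsilon}n^{-c}.
\]
By linearity, the expected fraction of rich $B$ failing is $O(n^{-c}/\epsilon)$. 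Markov's inequality then bounds the probability that more than half fail by $O(n^{-c}/\epsilon)\le n^{-c/4}$ for $n$ large. Hence at least $\epsilon N/4$ sets are $n^{1/2-c}\lambda$-heavy, giving $E_{k+1,\epsilon N/4,n^{1/2-c}\lambda}$ with probability at least $1-n^{-c/4}$.

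\textbf{Main obstacle.} The delicate step is the bookkeeping in Case $E''$. The threshold $K n^cN$ defining the split must be chosen so that two things hold at once: the rich sets carry at least half of the pairs, and $m$ is large enough that the Gaussian small-ball bound beats $n^{1/2-c}$ by a factor $n^{-c}$. One must also check that the constant $K$ required for the $1/3$ in Case $E'$ is compatible with this choice.
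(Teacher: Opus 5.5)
Your proof is correct and follows essentially the same Tao--Vu double-counting dichotomy as the paper. The first regime (many children with at least one heavy parent) is handled by \cref{lemma:large_parent_has_large_child} plus a Markov-type bound, and the second (at least $\epsilon N/2$ children with $\Omega(\epsilon n^{1-c})$ heavy parents) by the Gaussian small-ball bound \cref{littlewoodinequality} plus Markov; the only differences are cosmetic: you threshold on the number of children with $h_B\ge1$ and use reverse Markov with a large constant $K$, whereas the paper thresholds on $F_1+\cdots+F_K$ with $K=\lfloor\epsilon\gamma n^{1-c}\rfloor$ and invokes \cref{lemma:first_moment} with tuned constants.
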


Under the event $E'$, the count of heavy minors jumps from $N$ to $n^c N$ (population growth). Under $E''$, the magnitude threshold improves by a factor $n^{1/2-c}$ at the cost of a constant-factor reduction in $N$ (magnitude growth).
The proofs of both lemmas are given in \cref{proofofproposition3.12}; the main inputs are the cofactor expansion \cref{eq:cofactor_sketch}, the Gaussian Littlewood-Offord bound (\cref{littlewoodinequality}), and a double-counting argument.

The analysis proceeds in three steps, with $k_0 := \lfloor \epsilon n \rfloor$ and $k_1 := \lfloor (1-\epsilon)n \rfloor$.

\paragraph{Step 1 ($k = 1 \to k_0$): Warm-up.}
Using only the elementary $1/e$ bound, we show that at least one $1$-heavy minor survives up to $k = k_0$ (cf.\ \cref{prop:step_1}):
\begin{equation}
    \Pr(E_{k_0,1,1}) \ge 1 - e^{-\Omega(n)} \, .
\end{equation}
This step does not grow $N$ or $\lambda$; it provides the initial condition for Step~2.

\paragraph{Step 2 ($k_0 \to k_1$): Growth phase.}
At each step $k \to k+1$, we apply \cref{lemma:growingmanylargeminors_main} to attempt either population growth or magnitude growth, falling back on \cref{maintainingmanylargeminors_main} if neither succeeds.
The key observation is that population growth and magnitude growth \emph{cannot both stall for long}: if $N$ stays small, then the event $E''$ is triggered and $\lambda$ must grow; conversely, if $\lambda$ stays small, then $E'$ is triggered and $N$ must grow.
This tradeoff is made precise via a potential function $W_k$ that decreases whenever either type of growth occurs, and is shown to be a supermartingale with bounded increments (\cref{prop:success_prob}).
After $k_1 - k_0 \approx (1-2\epsilon)n$ steps, the accumulated magnitude growth yields (cf.\ \cref{proposition:second_step})
\begin{equation}
    \Pr(E_{k_1, 1, n^{(1/2-\alpha)n}} \mid E_{k_0,1,1}) \ge 1 - e^{-\Omega(n)} \, .
\end{equation}

\paragraph{Step 3 ($k_1 \to n$): Endgame.}
When $k > k_1 = (1-\epsilon)n$, different $(k+1)$-minors share most of their columns, making their permanents highly correlated. The row-by-row independence argument used in Steps~1--2 no longer applies.
Instead, we find many $(n-L)$-minors (with $L = O(\log n)$) whose complements are disjoint, and iteratively grow them to size $n$ by a separate induction (\cref{endgame}).
The McDiarmid inequality (\cref{lemma:McDiarmid}) is used here in place of the Azuma-Hoeffding inequality, because the Gaussian entries are unbounded and the standard bounded-difference condition fails.
The endgame incurs a multiplicative loss of $n^{\log n}$ in the permanent lower bound, which is absorbed into the $o(1)$ term in the exponent:
\begin{align}
    \Pr(E_{n, 1, n^{(1/2-\alpha-o(1))n}} \mid E_{k_1, 1, n^{(1/2-\alpha)n}}) \geq 1 - n^{-\Omega(1)} \, .
\end{align}
Combining all three steps proves \cref{eq:bound_event_prob}.

\section{Conclusion and Discussion}\label{sec:conclusion}

In this work, we prove a weak version of the permanent anti-concentration conjecture (PACC) for random complex Gaussian matrices, establishing that $|\Per(M)| = n^{(1/2+o(1))n}$ asymptotically almost surely (\cref{coro:typicalvalue}).
To the best of our knowledge, this is the first rigorous result on the typical magnitude of the permanent for the complex Gaussian distribution.
Previous results of this type were limited to discrete distributions: Bernoulli matrices~\cite{tao2009permanent}, symmetric Rademacher matrices~\cite{kwan_permanent_2022}, and more general discrete random matrices~\cite{hunter_exponential_2025}.
The complex Gaussian distribution is the physically relevant one for boson sampling, since submatrices of Haar-random unitary matrices --- which describe random linear optical networks --- are well approximated by i.i.d.\ complex Gaussian matrices in the regime $m \gg n^2$~\cite[Theorem~35]{Aaronson2013}. Through this approximation, our result implies that the transition amplitudes of $n$ photons through a random $m$-mode linear optical network have magnitude $n^{(1/2+o(1))n}/m^{n/2}$ with high probability, consistent with the $n!$ multi-photon path amplitudes adding as a random walk in the complex plane.

As discussed in \cref{remark:hardness}, the weak anti-concentration bound, combined with the arguments of Theorems~1.3 and~1.7 in Ref.~\cite{Aaronson2013}, implies that classical simulation of boson sampling to within $O(1/n^{2\alpha n})$ total variation distance would place $\sharpP$-hard problems in $\BPP^{\NP}$, collapsing the polynomial hierarchy. This accuracy threshold is far stronger than the $1/\poly(n)$ implied by the full PACC, and therefore our result does not yet suffice to establish the classical hardness of approximate boson sampling at physically relevant accuracy.

In \cite{Aaronson2013}, a weak anti-concentration bound of the form $\Pr_{M \sim \calG^{n\times n}} \left[ |\Per(M)|^2 < \alpha \cdot n! \right] < 1 - \frac{(1-\alpha)^2}{n+1}$ was proved, where $\alpha < 1$ is a constant.
This bound was proved by analyzing the second moment of $|\Per(M)|^2$ or equivalently, the output probability and then apply the Paley-Zygmund inequality.
Recently, similar second-moment analysis has been extended to the saturated regime, where $m = \Theta(n)$, and leads to a weak bound of similar scaling~\cite{mhiri2026boson,kolarovszki2026general}. 
One can see that such a bound is weaker than the Tao-Vu type superexponential bound.

Closing the gap between our weak bound (with denominator $n^{\alpha n}$, superexponential in $n$) and the full PACC (with denominator $\poly(n)$) remains an important open problem.
Our proof extends the row-exposure framework of Tao and Vu~\cite{tao2009permanent}, which inherently loses a factor of $n^{\log n}$ in the endgame step (\cref{endgame}). New techniques may be needed to overcome this barrier.
Whether the moment-based approach of Nezami~\cite{nezami_permanent_2021}, which provides tight lower bounds on all positive integer moments of $|\Per(M)|^2$, can be combined with our approach to yield stronger bounds is an open question.
Proving anti-concentration directly for submatrices of Haar-random unitary matrices, bypassing the Gaussian approximation, would extend the result to arbitrary ratios $m/n$ and strengthen the connection to boson sampling experiments.

\section*{Acknowledgments}
We thank the helpful discussion with Qin Li. F. M. acknowledges the support by City University of Hong Kong (Project No. 9610623) and the YTJX Academy. B.C. acknowledges the support by the CQT Young Researcher Career Development Grant (26-YRCDG-BC). M.H.Y. is supported by National Natural Science Foundation of China (11875160 and U1801661), 
the Natural Science Foundation of Guangdong Province (2017B030308003), 
the Science, Technology and
Innovation Commission of Shenzhen Municipality (JCYJ20170412152620376 and JCYJ20170817105046702 and KYTDPT20181011104202253), 
the Key R\&D Program of Guangdong province (2018B030326001), the Economy, 
Trade and Information Commission of Shenzhen Municipality (201901161512), 
Guangdong Provincial Key Laboratory (Grant No. 2019B121203002).

\appendices

\section{Preliminaries}

\subsection{Complex Gaussian random variables}

Here, we collect some results about complex Gaussian distributions.
A complex Gaussian random variable $z \sim \CC \calN(\mu, \sigma^2, C)$, is specified by three parameters, the \emph{mean}, the \emph{variance} and the \emph{relation},
\begin{align}
    \mu &:= \Ex [z] & \sigma^2 &:= \Ex[(z - \mu)(z - \mu)^*] & C &:= \Ex [(z - \mu)^2]\ .
\end{align}
The real and imaginary parts of a complex Gaussian random variable are bivariate Gaussian distributed.
Throughout this work, we only consider the case $C = 0$, which implies that the real and the imaginary part of $z$ are independent and of variance $\frac{\sigma^2}{2}$.
In this case, we can simplify the notation to be $z \sim \CC \calN (\mu, \sigma^2)$.
The standard complex Gaussian distribution is $\calG := \CC \calN (0, 1)$, whose real and imaginary parts are independent Gaussian random variables with mean 0 and variance $1/2$.

We first give the probability density function of a complex Gaussian random variable.

\begin{proposition}
    The probability density function for $z \sim \CC \calN (\mu, \sigma^2)$ is given by
    \begin{align}
        p(z) = \frac{1}{\pi \sigma^2} \exp(- \frac{|z - \mu|^2}{\sigma^2}) \ .
    \end{align}
\end{proposition}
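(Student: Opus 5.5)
The statement to prove is the density formula for $z \sim \CC\calN(\mu,\sigma^2)$ with zero relation. I will reduce it to the joint density of the real and imaginary parts. Write $z = \mu + x + iy$ with $x = \Re(z-\mu)$ and $y = \Im(z-\mu)$. The first step is to use the hypothesis $C = 0$. Since
\[
\Ex[(z-\mu)^2] = \Ex[x^2] - \Ex[y^2] + 2i\,\Ex[xy],
\]
$C=0$ forces $\Ex[x^2]=\Ex[y^2]$ and $\Ex[xy]=0$. Combined with
\[
\sigma^2 = \Ex[|z-\mu|^2] = \Ex[x^2]+\Ex[y^2],
\]
this gives $\Ex[x^2]=\Ex[y^2]=\sigma^2/2$ with zero covariance. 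This matches what the paper asserts in the preliminaries.

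The second step uses that $(x,y)$ is bivariate Gaussian, which is part of the definition of a complex Gaussian. Zero covariance therefore means $x$ and $y$ are independent, each distributed as $\calN(0,\sigma^2/2)$. Their joint density with respect to Lebesgue measure on $\mathbb{R}^2$ is the product
\[
\frac{1}{\sqrt{\pi\sigma^2}}e^{-x^2/\sigma^2}\cdot\frac{1}{\sqrt{\pi\sigma^2}}e^{-y^2/\sigma^2}=\frac{1}{\pi\sigma^2}\exp\!\left(-\frac{x^2+y^2}{\sigma^2}\right).
\]

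The third step is to identify $\CC$ with $\mathbb{R}^2$, where the area element is $dx\,dy$, and translate by $\mu$; translation has unit Jacobian. Substituting $x^2+y^2=|z-\mu|^2$ then yields the claimed $p(z)$.

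There is no real obstacle here. The only point needing care is the first step, and specifically the role of the hypothesis $C=0$: it is what makes the variances of $x$ and $y$ equal and their covariance vanish. Without it the level sets of the density would be ellipses rather than circles.
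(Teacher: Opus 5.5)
Your proof is correct and takes the same route as the paper. You split $z-\mu$ into real and imaginary parts, each distributed as $\calN(0,\sigma^2/2)$ and independent, and multiply the two one-dimensional densities. The only difference is that you derive the independence and the equal variances from $C=0$ together with bivariate Gaussianity, whereas the paper takes these as given from its preliminary remarks.
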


\begin{proof}
    Let $z = x + iy$ and $\mu = \mu_x + i\mu_y$.
    We have $x \sim \calN(\mu_x, \sigma^2/2)$ and $y \sim \calN(\mu_y, \sigma^2/2)$ are independent Gaussian random variables.
    Multiplying the two independent Gaussian densities gives
    \begin{equation}
        p(z) = p(x)p(y) = \tfrac{1}{\pi \sigma^2} \exp\!\left(- \tfrac{|z - \mu|^2}{\sigma^2}\right) \, .
    \end{equation}
\end{proof}

Then, we can compute the probability that a complex Gaussian random variable deviates from its mean.

\begin{proposition}
    \label{prop:prob_z_circle}
    Let $z \sim \CC \calN(\mu, \sigma^2)$ be a complex Gaussian random variable with mean $\mu$ and variance $\sigma^2$. Then,
    \begin{align}
        \Pr (|z - \mu| \leq R) = 1 - e^{-\frac{R^2}{\sigma^2}} \ .
    \end{align}
    In particular, when $z \sim \calG$, we have $\Pr(|z| \leq 1) = 1 - e^{-1}$.
\end{proposition}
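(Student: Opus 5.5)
The statement to prove is \cref{prop:prob_z_circle}. The plan is a direct computation from the density in the preceding proposition, written in polar coordinates centered at the mean. First shift to $w := z - \mu$. By the previous proposition, $w$ has density $p(w) = \frac{1}{\pi\sigma^2} e^{-|w|^2/\sigma^2}$. Then
\begin{equation}
    \Pr(|z-\mu| \le R) = \int_{|w|\le R} \frac{1}{\pi\sigma^2} e^{-|w|^2/\sigma^2}\, d^2 w \, .
\end{equation}

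Next, pass to polar coordinates $w = r e^{i\theta}$ with area element $r\,dr\,d\theta$. The angular integral contributes a factor $2\pi$, which leaves
\begin{equation}
    \frac{2}{\sigma^2}\int_0^R r\, e^{-r^2/\sigma^2}\, dr \, .
\end{equation}
The integrand is the derivative of $-e^{-r^2/\sigma^2}$ up to the factor $\sigma^2/2$. Evaluating between $0$ and $R$ therefore gives $1 - e^{-R^2/\sigma^2}$. Equivalently, $|w|^2/\sigma^2$ is exponential with rate $1$, since it equals the sum of squares of two independent $\calN(0,1/2)$ variables rescaled.

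For the special case, set $\mu = 0$, $\sigma^2 = 1$ and $R = 1$, which yields $\Pr(|z|\le 1) = 1 - e^{-1}$. There is no real obstacle here. The only point requiring care is the normalization convention: each real component has variance $\sigma^2/2$, so the exponent is $|w|^2/\sigma^2$ and not $|w|^2/(2\sigma^2)$. This convention has already been fixed by the preceding proposition.
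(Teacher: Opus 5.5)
Your proposal is correct and follows essentially the same route as the paper: shift to $z-\mu$, pass to polar coordinates, and integrate $\frac{2r}{\sigma^2}e^{-r^2/\sigma^2}$ from $0$ to $R$. The alternative remark that $|z-\mu|^2/\sigma^2$ is exponential with rate $1$ is also correct and gives the same result.
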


\begin{proof}
    Let $z' := z - \mu \sim \CC \calN(0, \sigma^2)$. Polar change of variables gives
    \begin{equation}
        \Pr (|z'| \leq R) = \int_{0}^{R} \tfrac{2 r}{\sigma^2} \exp\!\left(- \tfrac{r^2}{\sigma^2}\right) \dd{r} = 1 - e^{-R^2/\sigma^2} \, .
    \end{equation}
\end{proof}

\subsection{Probabilistic Tools}

The following lemma replaces the Littlewood-Offord-Erd\H{o}s inequality~\cite{erdos1945lemma} used in Ref.~\cite{tao2009permanent}.
\begin{lemma}
  \label{littlewoodinequality} Let $\lambda > 0$, and $m, k \geq 1$. Let $v_1, \ldots, v_m$ be complex numbers such that at least $k$ of them have modulus larger than $\lambda$. Then for $a_1,
  \ldots, a_m$ i.i.d. sampled from $\calG$, we have
  \begin{equation}
    \Pr ( | a_1 v_1 + \cdots + a_m v_m | \le r \lambda )
    = O \left( \frac{r^2}{k} \right),  \hspace{0.5ex} \forall
    \hspace{0.5ex} r > 0
  \end{equation}
\end{lemma}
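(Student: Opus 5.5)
The key fact is that a linear combination of i.i.d.\ standard complex Gaussians is again a circularly symmetric complex Gaussian. Set $S := a_1 v_1 + \cdots + a_m v_m$. Since the $a_j$ are independent with $a_j \sim \calG$, the vector $(a_1,\ldots,a_m)$ is circularly symmetric, and so is any fixed linear functional of it. Hence $S \sim \CC\calN(0, \sigma^2)$, where
\begin{equation}
    \sigma^2 = \Ex |S|^2 = \sum_{j=1}^m |v_j|^2 .
\end{equation}
The cross terms vanish because $\Ex[a_j a_l^*] = \delta_{jl}$, and the relation $\Ex[S^2] = \sum_j v_j^2 \Ex[a_j^2] = 0$ since $\Ex[a_j^2]=0$ for $\calG$. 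This places $S$ exactly in the setting of \cref{prop:prob_z_circle} with $\mu = 0$. The degenerate case $\sigma = 0$ is excluded, because at least $k \ge 1$ of the $v_j$ are nonzero.

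Next I would lower bound the variance using the hypothesis. At least $k$ of the $v_j$ satisfy $|v_j| > \lambda$, so
\begin{equation}
    \sigma^2 \ge \sum_{j : |v_j| > \lambda} |v_j|^2 > k \lambda^2 .
\end{equation}

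Then I apply \cref{prop:prob_z_circle} with $R = r\lambda$ and use $1 - e^{-x} \le x$ for $x \ge 0$:
\begin{equation}
    \Pr(|S| \le r\lambda) = 1 - e^{-r^2\lambda^2/\sigma^2} \le \frac{r^2 \lambda^2}{\sigma^2} < \frac{r^2}{k} .
\end{equation}
This holds for all $r > 0$ with implied constant $1$. The bound is trivially consistent when $r^2 \ge k$, since the right side then exceeds $1$.

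I do not expect a real obstacle here. The only point needing care is checking that the relation term vanishes, so that the density in \cref{prop:prob_z_circle} applies. That follows from the independence of the real and imaginary parts of each $a_j$, which have equal variance, and the resulting fact that $S$ has real and imaginary parts with equal variance $\sigma^2/2$ and zero covariance. Compared with the discrete Littlewood--Offord--Erd\H{o}s setting, the Gaussian case actually gives a stronger estimate: it is quadratic in $r$ and has no additive $1/\sqrt{k}$ floor.
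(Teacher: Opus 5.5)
Your proof is correct and follows essentially the same route as the paper. You identify $S$ as $\CC\calN(0,\sum_j |v_j|^2)$, bound the variance below by $k\lambda^2$, and apply \cref{prop:prob_z_circle} together with $1-e^{-x}\le x$, which yields the estimate with implied constant $1$.
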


\begin{proof}
Since $a_1, \cdots, a_m$ are independent standard complex Gaussian random variables, the linear combination of them is also complex Gaussian, with independent real and imaginary part,
\begin{align}
    z := a_1 v_1 + \cdots + a_m v_m \sim \CC \calN \left(0, \sum_{i=1}^m |v_i|^2 \right) \ .
\end{align}
\cref{prop:prob_z_circle} then gives $\Pr(|z| \le r\lambda) = 1 - e^{-r^2\lambda^2/\sum_i |v_i|^2} \le 1 - e^{-r^2/k} = O(r^2/k)$.
\end{proof}

We also require Lemma 2.1 of Ref.~\cite{tao2009permanent}, which bounds the probability of many (not necessarily independent) events occurring, given that each has a non-vanishing probability.
\begin{lemma}[First moment {\cite{tao2009permanent}}]
    \label{lemma:first_moment}Let $E_1, \ldots, E_m$ be arbitrary events (not
    necessarily independent) such that $\Pr (E_i) \ge 1 - \delta$ for all
    $1 \leq i \leq m$ and some $\delta > 0$, and let $0 < c < 1$. Then,
    \begin{equation}
        \Pr ( \text{At most $c m$ among $E_1, \ldots, E_m$ are false} ) \ge 1 - \frac{\delta}{c} \, .
    \end{equation}
\end{lemma}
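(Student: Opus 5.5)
The plan is a double-counting (Markov-type) argument on the number of false events. Let $X := \sum_{i=1}^m \mathbf{1}[\overline{E_i}]$ be the number of events among $E_1,\ldots,E_m$ that fail. Linearity of expectation holds without any independence assumption, so it gives
\begin{equation}
    \Ex[X] = \sum_{i=1}^m \Pr(\overline{E_i}) \le \delta m \, .
\end{equation}

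Next, apply Markov's inequality to the nonnegative random variable $X$ with threshold $cm$. We may assume $m \ge 1$, since otherwise the claim is trivial. This yields
\begin{equation}
    \Pr(X > cm) \le \Pr(X \ge cm) \le \frac{\Ex[X]}{cm} \le \frac{\delta}{c} \, .
\end{equation}
The event ``at most $cm$ among $E_1,\ldots,E_m$ are false'' is exactly $\{X \le cm\}$. Taking complements therefore gives the stated bound $\Pr(X \le cm) \ge 1 - \delta/c$.

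There is no real obstacle here. The only point to check is that the argument never uses independence, since it passes only through the expectation of a sum of indicators. For that reason the lemma can later be applied to the strongly correlated events indexed by the minors $M_A$ in the proofs of \cref{maintainingmanylargeminors_main} and \cref{lemma:growingmanylargeminors_main}. One should also note that when $\delta/c \ge 1$ the statement is vacuous, so no case distinction is needed.
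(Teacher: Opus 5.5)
Your proof is correct and is the standard argument: linearity of expectation bounds the expected number of failed events by $\delta m$ without any independence, and Markov's inequality at threshold $cm$ gives the bound. The paper does not reprove this lemma but cites Lemma~2.1 of Tao and Vu, whose proof is exactly this first-moment/Markov argument.
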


Unlike Ref.~\cite{tao2009permanent}, we use the McDiarmid inequality~\cite{mcdiarmid1989method} instead of Azuma's inequality~\cite{alon2016probabilistic} as our concentration tool, because our analysis involves unbounded Gaussian variables.
\begin{lemma}[McDiarmid inequality]
  \label{lemma:McDiarmid}Let $X_1, X_2, \ldots, X_n$ be independent random variables
  such that $X_i \in \mathfrak{X}_i$ for some measurable sets
  $\mathfrak{X}_i$. Suppose that $f : \prod_{i = 1}^n \mathfrak{X}_i
  \rightarrow \mathbb{R}$ is ``Lipschitz'' in the following sense:   For each $k \le n$, and any two sequences $\vb{x}, \vb{x}' \in \prod_{i = 1}^n
  \mathfrak{X}_i$ that differ only in the $k$-th coordinate, we have
  \begin{equation}
    | f (\vb{x}) - f (\vb{x}') | \le \sigma_k \, .
  \end{equation}
  Let $Y = f (X_1, X_2, \ldots, X_n)$.  Then for any $\alpha > 0$,
  \begin{equation}
    \Pr \{ | Y - \Ex [Y] | \ge \alpha \} \le 2 \exp \left\{ -
    \frac{2 \alpha^2}{\sum_{i = 1}^n \sigma_i^2} \right\} \, .
  \end{equation}
\end{lemma}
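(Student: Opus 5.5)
The statement to prove is the McDiarmid inequality itself, a standard result. I would prove it by the Doob martingale method combined with Hoeffding's lemma.

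\emph{Step 1: the martingale.} Define $Y_k := \Ex[Y \mid X_1,\dots,X_k]$ for $k=0,\dots,n$. Then $Y_0=\Ex[Y]$, $Y_n = Y$, and $D_k := Y_k - Y_{k-1}$ are martingale differences with respect to the filtration $\mathcal{F}_k=\sigma(X_1,\dots,X_k)$.

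\emph{Step 2: each increment lies in a short interval.} Using independence of the $X_i$, write
\begin{equation}
Y_k = g_k(X_1,\dots,X_k), \qquad g_k(x_1,\dots,x_k) := \Ex\bigl[f(x_1,\dots,x_k,X_{k+1},\dots,X_n)\bigr].
\end{equation}
Conditioned on $\mathcal{F}_{k-1}$, the increment $D_k$ lies between
\begin{equation}
L_k := \inf_{x} g_k(X_1,\dots,X_{k-1},x) - Y_{k-1} \quad\text{and}\quad U_k := \sup_{x} g_k(X_1,\dots,X_{k-1},x) - Y_{k-1}.
\end{equation}
Both $L_k$ and $U_k$ are $\mathcal{F}_{k-1}$-measurable. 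By the Lipschitz hypothesis applied inside the expectation, $g_k(\dots,x) - g_k(\dots,x') \le \sigma_k$ for all $x,x'$, so $U_k - L_k \le \sigma_k$. This is the step where independence is genuinely needed: it lets the remaining coordinates be integrated out with the same law regardless of the value of $X_k$.

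\emph{Step 3: Hoeffding's lemma.} A mean-zero variable supported in an interval of length $\ell$ satisfies $\Ex e^{sD}\le e^{s^2\ell^2/8}$. Applying this conditionally on $\mathcal{F}_{k-1}$ gives
\begin{equation}
\Ex\bigl[e^{sD_k}\mid\mathcal{F}_{k-1}\bigr]\le e^{s^2\sigma_k^2/8}.
\end{equation}
Iterating the tower property yields
\begin{equation}
\Ex\, e^{s(Y-\Ex Y)}\le \exp\Bigl(\tfrac{s^2}{8}\sum_k\sigma_k^2\Bigr).
\end{equation}
If I had to justify Hoeffding's lemma itself, I would bound $e^{sx}$ by the chord over $[L,U]$ using convexity. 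The logarithm of the resulting bound is a function $\phi(u)$ with $\phi(0)=\phi'(0)=0$ and $\phi''\le 1/4$, which gives the claim by Taylor's theorem.

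\emph{Step 4: Chernoff bound.} Markov's inequality gives
\begin{equation}
\Pr(Y-\Ex Y\ge\alpha)\le \exp\Bigl(-s\alpha+\tfrac{s^2}{8}\sum_k\sigma_k^2\Bigr).
\end{equation}
Choosing $s=4\alpha/\sum_k\sigma_k^2$ gives the bound $\exp(-2\alpha^2/\sum_k\sigma_k^2)$. Applying the same argument to $-f$ and taking a union bound over the two tails yields the factor $2$.

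The main technical point is Step 2. One must check that the conditional range of $D_k$ has length at most $\sigma_k$ even though the sets $\mathfrak{X}_i$, and hence the variables, may be unbounded. This works because the hypothesis bounds differences of $f$ uniformly, not the values of $f$ themselves, so unbounded supports are harmless. One also needs measurability of $g_k$ and integrability of $f$; both follow from Fubini's theorem once $f$ is assumed integrable, which is implicit in $\Ex[Y]$ being defined.
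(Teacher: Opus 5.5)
Your proof is correct. It is the standard bounded-differences argument: the Doob martingale $Y_k=\Ex[Y\mid X_1,\dots,X_k]$, a conditional range of length at most $\sigma_k$ for each increment, Hoeffding's lemma, and a Chernoff bound with $s=4\alpha/\sum_k\sigma_k^2$. The paper gives no proof of this lemma and simply imports it from McDiarmid's survey, whose argument is this one.

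Two optional refinements:
\begin{itemize}
\item You need not assume integrability. Changing coordinates one at a time gives $|f(\vb{x})-f(\vb{x}')|\le\sum_k\sigma_k$ for all $\vb{x},\vb{x}'$, so $f$ is bounded.
\item You can sidestep measurability of $L_k$ and $U_k$. Apply Hoeffding's lemma separately for each fixed value of $(X_1,\dots,X_{k-1})$; the resulting conditional moment generating function is then measurable in the past by Fubini.
\end{itemize}
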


We also need the following standard concentration inequality for supermartingales with bounded increments, which is a direct consequence of the Azuma-Hoeffding inequality.
\begin{lemma}[Supermartingale concentration]
  \label{lemma:supermartingale}
  Let $\mathcal{F}_0 \subset \mathcal{F}_1 \subset \cdots \subset \mathcal{F}_m$ be a filtration, and let $W_0, W_1, \ldots, W_m$ be $\mathcal{F}_i$-measurable real-valued random variables satisfying the supermartingale property $\Ex[W_i \mid \mathcal{F}_{i-1}] \leq W_{i-1}$ for all $1 \leq i \leq m$. Assume also that $|W_i - W_{i-1}| \leq c$ for all $1 \leq i \leq m$. Then for any $\lambda \geq 0$,
  \begin{equation}
    \Pr(W_m - W_0 \geq \lambda) \leq \exp\left(-\frac{\lambda^2}{2 c^2 m}\right) \, .
  \end{equation}
\end{lemma}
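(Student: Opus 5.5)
The plan is the standard exponential-moment (Chernoff) argument behind Azuma--Hoeffding, applied directly to the supermartingale differences. Set $D_i := W_i - W_{i-1}$ for $1 \le i \le m$. Then $D_i$ is $\mathcal{F}_i$-measurable, satisfies $|D_i| \le c$, and satisfies $\Ex[D_i \mid \mathcal{F}_{i-1}] \le 0$. Also $W_m - W_0 = \sum_{i=1}^m D_i$. For a parameter $t \ge 0$, Markov's inequality gives
\begin{equation}
\Pr(W_m - W_0 \ge \lambda) \le e^{-t\lambda}\, \Ex\!\left[e^{t \sum_{i=1}^m D_i}\right].
\end{equation}
It therefore suffices to bound the exponential moment of the sum.

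The key one-step estimate is a conditional bound on $\Ex[e^{tD_i} \mid \mathcal{F}_{i-1}]$. I will use convexity of $x \mapsto e^{tx}$ on $[-c, c]$. Writing $x$ as a convex combination of the endpoints $\pm c$ gives, for $|x| \le c$,
\begin{equation}
e^{tx} \le \cosh(tc) + \frac{x}{c}\sinh(tc).
\end{equation}
Substitute $x = D_i$ and take conditional expectations. Since $t \ge 0$, we have $\sinh(tc) \ge 0$, and the supermartingale property gives $\Ex[D_i \mid \mathcal{F}_{i-1}] \le 0$. Together these yield
\begin{equation}
\Ex[e^{tD_i} \mid \mathcal{F}_{i-1}] \le \cosh(tc) \le e^{t^2c^2/2}.
\end{equation}
The last step is the elementary inequality $\cosh y \le e^{y^2/2}$, which follows by comparing Taylor series term by term, since $(2k)! \ge 2^k k!$. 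This step is where the sign condition $t \ge 0$ matters, and it is the only place where the supermartingale property is used. I regard it as the crux of the proof; it is a conditional analogue of Hoeffding's lemma.

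Next I will iterate using the tower property. The quantity $e^{t\sum_{i<m} D_i}$ is $\mathcal{F}_{m-1}$-measurable, so
\begin{equation}
\Ex\!\left[e^{t\sum_{i\le m} D_i}\right] = \Ex\!\left[e^{t\sum_{i<m} D_i}\,\Ex[e^{tD_m}\mid\mathcal{F}_{m-1}]\right] \le e^{t^2c^2/2}\,\Ex\!\left[e^{t\sum_{i<m} D_i}\right].
\end{equation}
Induction on $m$ then gives $\Ex[e^{t(W_m - W_0)}] \le e^{m t^2 c^2/2}$. Hence $\Pr(W_m - W_0 \ge \lambda) \le \exp(-t\lambda + m t^2 c^2/2)$ for every $t \ge 0$.

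Finally, I will optimize over $t$ by choosing $t = \lambda/(mc^2)$, which is nonnegative because $\lambda \ge 0$. This gives exactly $\exp(-\lambda^2/(2c^2 m))$, as claimed. The case $c = 0$ is trivial: then $W_m = W_0$, so the probability is $0$ for $\lambda > 0$, while for $\lambda = 0$ the right-hand side is read as $1$. Integrability causes no trouble, because the increments are bounded and so all exponential moments are finite.
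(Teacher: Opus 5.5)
Your argument is correct and takes the same route as the paper, which gives no proof and simply cites the Azuma--Hoeffding inequality. You write out the standard exponential-moment proof of that inequality, and handling the supermartingale increments directly (rather than first compensating to a martingale, whose increments are only bounded by $2c$) via $\Ex[e^{tD_i}\mid\mathcal{F}_{i-1}] \le \cosh(tc) \le e^{t^2c^2/2}$ gives exactly the stated constant $2c^2 m$.
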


\section{The Proofs}\label{proofofproposition3.12}

The proof structure follows Tao and Vu~\cite{tao2009permanent} closely.
We use the same row-exposure strategy, the same $E_{k,N,\lambda}$ framework, and the same three-step decomposition (warm-up, growth phase, endgame).
Several results carry over unchanged, as they do not depend on the entry distribution: the First Moment Lemma (\cref{lemma:first_moment}), the supermartingale concentration inequality (\cref{lemma:supermartingale}), and the existence of many complement-disjoint heavy minors in the endgame (\cref{lemma:many_heavy_minors}).
Tao and Vu noted that their result ``holds for virtually any (not too degenerate) discrete distribution''~\cite[Remark~1.6]{tao2009permanent}; however, their proof relies on the Littlewood-Offord-Erd\H{o}s inequality for discrete random signs and Azuma's inequality for bounded increments, neither of which applies to continuous, unbounded complex Gaussian entries.
The main technical contribution of the present work is the replacement of three distribution-dependent tools:
the Littlewood-Offord-Erd\H{o}s inequality is replaced by a Gaussian anti-concentration bound (\cref{littlewoodinequality});
the discrete single-step bound of~\cite[Lemma~4.1]{tao2009permanent} is replaced by an analytic argument using the rotational symmetry of the complex Gaussian distribution (\cref{lemma:large_parent_has_large_child});
and Azuma's inequality is replaced by the McDiarmid inequality (\cref{lemma:McDiarmid}), which accommodates the unbounded support of Gaussian variables at the cost of conditioning on a high-probability event.
The propositions on maintaining and growing large minors (\cref{maintainingmanylargeminors,prop:growingmanylargeminors}) follow the same combinatorial arguments as Propositions~3.2 and~3.3 in Ref.~\cite{tao2009permanent}, with constants adjusted to the Gaussian setting; their proofs are included for completeness.

We begin with the three new distribution-dependent lemmas.
The following lemma is the Gaussian counterpart to Lemma~4.1 in Ref.~\cite{tao2009permanent}; although the statements are analogous, the proof technique is different.
\begin{lemma}
    \label{lemma:large_parent_has_large_child}
    Suppose that $M \sim \calG^{n \times n}$ is sampled.
    Let $A \in \binom{[n]}{k}$ for some $1 \le k < n$, and let $i \not\in A$. 
    Assume that the submatrix $M^{(k)}$ is fixed and we expose a random row $\vec{X}_{k+1} \sim \calG^n$. Then
    \begin{equation}
    \Pr\left( | \Per ( M_{A \cup \{ i \}} ) | \ge \left| \Per (M_A) \right| \right) \ge \frac{1}{e} \, .
    \end{equation}
\end{lemma}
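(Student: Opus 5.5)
Conditioned on the first $k$ rows $M^{(k)}$, all the permanents $\Per(M_A)$ and $\Per(M_{A\cup\{i\}\setminus\{j\}})$ for $j\in A$ are fixed complex numbers. The only randomness is in the exposed row $\vec{X}_{k+1}$. So I will apply the cofactor expansion \eqref{eq:cofactor_sketch} along row $k+1$:
\begin{equation*}
\Per(M_{A\cup\{i\}}) = m_{k+1,i}\Per(M_A) + \sum_{j\in A} m_{k+1,j}\Per(M_{A\cup\{i\}\setminus\{j\}}) .
\end{equation*}
This is a linear combination of the i.i.d.\ entries $m_{k+1,i}, m_{k+1,j}\sim\calG$ with fixed coefficients. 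As in the proof of \cref{littlewoodinequality}, it is therefore a circularly symmetric complex Gaussian,
\begin{equation*}
z := \Per(M_{A\cup\{i\}}) \sim \CC\calN(0,\sigma^2), \qquad \sigma^2 = |\Per(M_A)|^2 + \sum_{j\in A}|\Per(M_{A\cup\{i\}\setminus\{j\}})|^2 \ge |\Per(M_A)|^2 .
\end{equation*}
If $\Per(M_A)=0$ the claimed inequality holds trivially with probability $1$, so I will assume $\lambda:=|\Per(M_A)|>0$. In that case $\sigma^2\ge\lambda^2>0$.

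Next I will apply \cref{prop:prob_z_circle} with $\mu=0$ and $R=\lambda$:
\begin{equation*}
\Pr(|z|\ge\lambda) = 1-\Pr(|z|<\lambda) = e^{-\lambda^2/\sigma^2} \ge e^{-1}.
\end{equation*}
The last step uses $\lambda^2/\sigma^2\le 1$. The boundary $|z|=\lambda$ has probability zero, so strict versus non-strict inequality does not matter. The bound holds for every fixed $M^{(k)}$, so it also holds for the conditional probability in the statement. Averaging over $M^{(k)}$ gives the unconditional bound as well.

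There is no real obstacle here; the one point to check is the variance formula. It needs the coefficient of each independent entry $m_{k+1,\ell}$ to be fixed by $M^{(k)}$, which holds because the cofactors involve only rows $1,\dots,k$ and columns other than $\ell$. It also needs the relation term $\Ex[z^2]$ to vanish, which follows from $\Ex[m^2]=0$ for $m\sim\calG$. Then the sum has independent real and imaginary parts, and \cref{prop:prob_z_circle} applies. This rotational-invariance step replaces Tao and Vu's combinatorial argument for $\pm1$ entries.
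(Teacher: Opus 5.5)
Your proof is correct for the lemma as stated, but it takes a different route from the paper's.

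The paper conditions on all entries of row $k+1$ other than $m_{k+1,i}$. This gives $\Per(M_{A\cup\{i\}})\sim\CC\calN(B,|\Per(M_A)|^2)$ with a generally nonzero mean $B=\sum_{j\in A}m_{k+1,j}\Per(M_{A\cup\{i\}\setminus\{j\}})$. It then needs the fact that an origin-centered disk captures no more mass of an off-center circular Gaussian than of the centered one. That fact is true by Anderson's inequality or symmetric-decreasing rearrangement, though the paper justifies it only informally via ``the density peaks at $B$''.

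You integrate out the whole row instead, getting a centered Gaussian with variance at least $|\Per(M_A)|^2$. The exact formula of \cref{prop:prob_z_circle} plus monotonicity in $\sigma^2$ then suffices, which is more elementary and fully rigorous.

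What the paper's route buys is a stronger conditional statement: the $1/e$ bound holds for every fixed value of $(m_{k+1,j})_{j\ne i}$. That is the form actually used in \cref{maintainingasinglelargeminor} and \cref{lemma:growing_many_large_minors}. There the events for different $i\in I$ share the entries $m_{k+1,j}$, $j\in A$, and are independent only after conditioning on them. To get $\Pr(\nu(A,I)=0)\le(1-1/e)^{|I|}$ or the Chernoff bound, one needs the $1/e$ bound pointwise under that conditioning.

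Your marginal bound does not give this on its own. By the Gaussian correlation inequality, the symmetric convex events $\{|\Per(M_{A\cup\{i\}})|<|\Per(M_A)|\}$ are positively correlated, so the marginals point the wrong way. To serve those applications you would have to condition on the $A$-column entries as well, and at that point the off-center disk inequality is unavoidable.
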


\begin{proof}
By co-factor expansion,
\begin{align}
    \Per ( M_{A \cup \{ i \}} ) ={}& m_{k + 1, i} \Per ( M_A ) \nonumber \\
    & + \sum_{j \in A} m_{k + 1, j} \Per ( M_{A \cup \{ i \} - \{ j \}} ) \, ,
\end{align}
where $m_{k+1,i}$ and $m_{k+1,j}$ are standard complex Gaussian random variables obeying $\mathbb{C}\mathcal{N}(0, 1)$.  
Denote the second part above as $B$,
\begin{align}
    B := \sum_{j \in A} m_{k + 1, j} \Per \left( M_{A \cup \{ i \} - \{ j \}} \right) \ .
\end{align}
Conditioned on all entries of the first $k + 1$ rows except the element $m_{k + 1, i}$, one can regard $B$ and $\Per (M_A)$ as complex constants.
Then the conditional distribution of $\Per \left( M_{A \cup \{ i \}} \right)$ is also complex normal distribution with mean $B$ and variance $\sigma^2 := | \Per (M_A) |^2$.
Let $p(x)$ denote the conditional probability density function of $\Per \left( M_{A \cup \{ i \}} \right)$, and we have
\begin{align}
    p(x) = \frac{1}{\pi \sigma^2} \exp[-\frac{1}{\sigma^2} \overline{(x- B)} (x- B)] \, .
\end{align}

Let $D := \{ |\Per(M_{A \cup \{i\}})| < \sigma \}$ be the complementary event. Then
\begin{equation}
    \Pr (D) = \int_{|x|<\sigma} p(x) \, \dd x \le \int_{|x|<\sigma} \tfrac{1}{\pi \sigma^2} e^{-|x|^2/\sigma^2} \dd x = 1 - e^{-1} \, ,
\end{equation}
where the inequality holds because the Gaussian density $p(x)$ is maximized at $x = B$, so shifting the center to the origin only increases the integral over the disk $|x| < \sigma$; equality holds iff $B = 0$. The lemma follows from $\Pr(D) + \Pr(|\Per(M_{A \cup \{i\}})| \ge |\Per(M_A)|) = 1$.
\end{proof}

Since the entries in the $(k+1)$-th row are independent, \cref{lemma:large_parent_has_large_child} immediately yields the following (cf.\ \cite[Lemma~4.2]{tao2009permanent}).

\begin{lemma}[Growing a single large minor]
\label{maintainingasinglelargeminor}
    Suppose $M \sim \calG^{n \times n}$ is sampled, $A \in \binom{[n]}{k}$ for some $1 \le k < n$, and $I \subseteq [n] - A$. Let
    \begin{equation}
        \nu(A, I) := \#\{i \in I : |\Per(M_{A\cup\{i\}})| \ge |\Per(M_A)|\} \, .
    \end{equation}
    Conditioning on $M^{(k)}$ fixed and exposing the random row $k+1$,
    \begin{equation}
        \Pr(\nu(A, I) \ge 1) \ge 1 - \eta^{-|I|} \, ,
    \end{equation}
    where $\eta := e/(e-1)$. This implies $\Pr(E_{k+1, 1, \lambda} \mid E_{k, 1, \lambda}) \ge 1 - \eta^{-(n-k)}$.
\end{lemma}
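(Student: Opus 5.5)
The plan is to condition on everything except the entries $m_{k+1,i}$ for $i\in I$, so that the events ``$|\Per(M_{A\cup\{i\}})| \ge |\Per(M_A)|$'' for different $i\in I$ become conditionally independent. Then \cref{lemma:large_parent_has_large_child} can be applied to each factor of a product.

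\textbf{Step 1: conditioning.} Fix $M^{(k)}$ and the row-$(k+1)$ entries $(m_{k+1,j})_{j\in A}$. For each $i\in I$ the cofactor expansion \cref{eq:cofactor_sketch} reads
\begin{equation}
\Per(M_{A\cup\{i\}}) = m_{k+1,i}\Per(M_A) + B_i, \qquad B_i := \sum_{j\in A} m_{k+1,j}\Per(M_{A\cup\{i\}\setminus\{j\}}) .
\end{equation}
Each $B_i$ involves only rows $1,\dots,k$ together with the row-$(k+1)$ entries in columns of $A$. Since $i\notin A$ and $A\cup\{i\}\setminus\{j\}$ never contains any other $i'\in I$, every $B_i$ is a fixed complex number under this conditioning. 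Likewise $\sigma := |\Per(M_A)|$ is fixed. The only remaining randomness is the family $(m_{k+1,i})_{i\in I}$, which is i.i.d.\ $\calG$ and independent of what we conditioned on.

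\textbf{Step 2: one factor at a time.} If $\sigma=0$, every $i$ trivially satisfies $|\Per(M_{A\cup\{i\}})|\ge 0$, so $\nu(A,I)=|I|$ and there is nothing to prove. Otherwise each $\Per(M_{A\cup\{i\}})$ is conditionally $\CC\calN(B_i,\sigma^2)$. The disk argument in the proof of \cref{lemma:large_parent_has_large_child} applies verbatim: a Gaussian centered at $B_i$ gives the disk $\{|x|<\sigma\}$ no more mass than one centered at $0$. Hence
\begin{equation}
\Pr\bigl(|\Per(M_{A\cup\{i\}})|<\sigma \bigm| \cdot\bigr)\le 1-e^{-1}=\eta^{-1}.
\end{equation}
These failure events are functions of distinct independent variables $m_{k+1,i}$, so
\begin{equation}
\Pr(\nu(A,I)=0\mid\cdot)\le \eta^{-|I|}.
\end{equation}
Averaging over the conditioned variables gives $\Pr(\nu(A,I)\ge1)\ge 1-\eta^{-|I|}$ given $M^{(k)}$.

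\textbf{Step 3: the event consequence.} On $E_{k,1,\lambda}$, choose $A$ as a measurable function of $M^{(k)}$, for instance the lexicographically first $\lambda$-heavy $k$-set. Take $I=[n]\setminus A$, so $|I|=n-k$. If $\nu(A,I)\ge1$, some $A\cup\{i\}$ satisfies $|\Per(M_{A\cup\{i\}})|\ge|\Per(M_A)|\ge\lambda$, so $E_{k+1,1,\lambda}$ holds. Since $E_{k,1,\lambda}$ is determined by $M^{(k)}$, integrating the conditional bound over $M^{(k)}\in E_{k,1,\lambda}$ yields $\Pr(E_{k+1,1,\lambda}\mid E_{k,1,\lambda})\ge 1-\eta^{-(n-k)}$.

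\textbf{Main obstacle.} The only genuinely delicate point is the independence claim in Step 1. The statement's remark that ``the entries in the $(k+1)$-th row are independent'' is not enough by itself, because the variables $B_i$ are correlated through the shared entries $m_{k+1,j}$, $j\in A$. The fix is to condition on those shared entries first and to observe that the disk bound from \cref{lemma:large_parent_has_large_child} holds uniformly in the center $B_i$.
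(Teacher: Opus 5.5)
Your proof is correct and follows the same route as the paper: apply \cref{lemma:large_parent_has_large_child} to each $i \in I$, multiply the per-column failure probabilities $1 - e^{-1} = \eta^{-1}$ using the independence of the entries $m_{k+1,i}$, and then take $I = [n] - A$ for a $\lambda$-heavy $A$. The paper asserts the independence step in one line, and you justify it properly by conditioning explicitly on $(m_{k+1,j})_{j \in A}$ and noting that the disk bound holds uniformly in the center $B_i$.
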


\begin{proof}
    Independence of $m_{k+1, i}$ across $i$ gives $\Pr(\nu(A, I) = 0) < (1 - 1/e)^{|I|} = \eta^{-|I|}$, hence $\Pr(\nu(A, I) \ge 1) \ge 1 - \eta^{-|I|}$. Setting $I = [n] - A$ and recalling that $E_{k,N,\lambda}$ is the event that at least $N$ of the $k \times k$ minors are $\lambda$-heavy yields $\Pr(E_{k+1, 1, \lambda} \mid E_{k, 1, \lambda}) \ge 1 - \eta^{-(n-k)}$.
\end{proof}

Using the above lemma consecutively, we can prove the following proposition.
\begin{proposition}[Step 1]\label{prop:step_1}
Let $0 < \epsilon < 1$ be some constant and let $k_0 = \lfloor \epsilon n \rfloor$. Then, $\Pr(E_{k_0,1,1})\ge 1 - \exp (-\Omega(n)) $. 
\end{proposition}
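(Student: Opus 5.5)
The plan is to chain the one-step bound of \cref{maintainingasinglelargeminor} from $k=1$ up to $k=k_0$, starting from the initial estimate \cref{eq:initial_event}, and to control the accumulated failure probability by a union bound. Note that the threshold $\lambda = 1$ is fixed throughout. A single $1$-heavy minor is propagated, and no growth in $N$ or $\lambda$ is needed.

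Concretely, I write $\Pr(E_{k_0,1,1})$ as
\begin{equation}
\Pr(E_{k_0,1,1}) \ge \Pr(E_{1,1,1}) \prod_{k=1}^{k_0-1} \Pr(E_{k+1,1,1} \mid E_{k,1,1}) \, ,
\end{equation}
which is valid because $E_{k_0,1,1} \supseteq E_{1,1,1} \cap E_{2,1,1} \cap \cdots \cap E_{k_0,1,1}$ and the conditional probabilities telescope. To apply \cref{maintainingasinglelargeminor} with $\lambda=1$, I condition on $M^{(k)}$ lying in $E_{k,1,1}$. I then fix one $1$-heavy minor $A$, chosen measurably, say the lexicographically first. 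By the lemma, with conditional probability at least $1-\eta^{-(n-k)}$ there is some $i \notin A$ with $|\Per(M_{A\cup\{i\}})| \ge |\Per(M_A)| \ge 1$. This gives $\Pr(E_{k+1,1,1}\mid E_{k,1,1}) \ge 1-\eta^{-(n-k)}$. Since the new row is independent of $M^{(k)}$, integrating the per-realization bound over $E_{k,1,1}$ gives the conditional bound.

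For $k \le k_0 \le \epsilon n$ we have $n-k \ge (1-\epsilon)n$, so each failure probability is at most $\eta^{-(1-\epsilon)n}$. Together with $\Pr(E_{1,1,1}) > 1-\eta^{-n}$, the inequality $\prod(1-x_k) \ge 1-\sum x_k$ yields
\begin{equation}
\Pr(E_{k_0,1,1}) \ge 1 - \eta^{-n} - \epsilon n\, \eta^{-(1-\epsilon)n} = 1 - \exp(-\Omega(n)) \, ,
\end{equation}
since $\eta>1$ is a fixed constant and $\epsilon<1$; the linear factor $\epsilon n$ is absorbed into the exponent.

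No step is a serious obstacle. The only point needing care is the measurability and conditioning issue in passing from a fixed $M^{(k)}$ to the event $E_{k,1,1}$. I handle it through the deterministic choice of $A$ as a function of $M^{(k)}$ and the independence of row $k+1$. The degenerate case $k_0 < 1$ for small $n$ is covered by taking $n$ large, as the $\Omega$ notation permits.
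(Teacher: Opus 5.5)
Your proposal is correct and follows essentially the same route as the paper: chain \cref{maintainingasinglelargeminor} with $\lambda = 1$ from $E_{1,1,1}$ up to $k_0$, then absorb the accumulated failure probability $\eta^{-n} + \epsilon n\,\eta^{-(1-\epsilon)n}$ into $\exp(-\Omega(n))$ (the paper does this by applying Bernoulli's inequality to $(1-\eta^{-(n-k_0)})^{k_0-1}$). One minor point: the product bound follows most directly from $\Pr(E_{k+1,1,1}) \ge \Pr(E_{k,1,1})\Pr(E_{k+1,1,1} \mid E_{k,1,1})$ and induction, rather than from the chain rule on the intersection (which conditions on $E_{1,1,1}\cap\cdots\cap E_{k,1,1}$), although your pointwise-in-$M^{(k)}$ bound justifies either reading.
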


\begin{proof}
Recall that $\Pr (E_{1, 1, 1}) \ge 1 - \eta^{- n}$ and $\eta \approx 1.582$.
Apply \cref{maintainingasinglelargeminor} consecutively and we have,
\begin{align*}
  \Pr (E_{k_0, 1, 1}) & \ge \Pr (E_{1, 1, 1}) \prod_{k=1}^{k_0-1} \Pr (E_{k+1,1,1} \mid E_{k,1,1}) \\
  & \ge (1 - \eta^{-n})(1 - \eta^{-(n-k_0)})^{k_0 - 1} \\
  & \ge 1 - \exp(-\Omega(n))\, ,
\end{align*}
where the last step uses the Bernoulli inequality together with $k_0 = \lfloor\epsilon n\rfloor$.
\end{proof}

Applying the Chernoff bound to independent applications of \cref{lemma:large_parent_has_large_child} gives the following (cf.\ \cite[Lemma~4.2]{tao2009permanent}).
\begin{lemma}\label{lemma:growing_many_large_minors}
    Under the assumptions of \cref{maintainingasinglelargeminor} with $k < (1-\epsilon)n$,
    \begin{equation}\label{eq12}
        \Pr(\nu(A, I) \ge |I|/3) \ge 1 - \exp(-\Omega(|I|)) \, ,
    \end{equation}
    which implies $\Pr(E_{k+1, \epsilon n/3, \lambda} \mid E_{k,1,\lambda}) \ge 1 - \exp(-\Omega(\epsilon n))$.
\end{lemma}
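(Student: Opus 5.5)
Conditioning on $M^{(k)}$, the $(k+1)$-th row $\vec X_{k+1}$ is independent of everything fixed so far. The plan is to write $\nu(A,I)=\sum_{i\in I}\mathbf 1_{F_i}$, where $F_i:=\{|\Per(M_{A\cup\{i\}})|\ge|\Per(M_A)|\}$, and apply a lower-tail Chernoff bound to this sum. The difficulty is that the $F_i$ are not independent: every $\Per(M_{A\cup\{i\}})$ involves the shared variables $m_{k+1,j}$, $j\in A$.

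To handle this, condition further on the entries $\{m_{k+1,j}\}_{j\in A}$. Then the term $B_i:=\sum_{j\in A}m_{k+1,j}\Per(M_{A\cup\{i\}\setminus\{j\}})$ is a fixed complex number for each $i$. Consequently $F_i$ depends only on the single variable $m_{k+1,i}$, and since the variables $m_{k+1,i}$, $i\in I\subseteq[n]-A$, are i.i.d.\ and independent of the conditioning, the $F_i$ are conditionally independent. The proof of \cref{lemma:large_parent_has_large_child} already establishes $\Pr(F_i\mid \cdot)\ge 1/e$ for arbitrary fixed mean $B_i$, so this bound holds uniformly in the conditioning. (If $\Per(M_A)=0$, then $F_i$ holds trivially.)

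We then stochastically dominate $\nu(A,I)$ from below by $\mathrm{Bin}(|I|,1/e)$. Since $1/3<1/e$, the Chernoff bound gives
\begin{equation*}
\Pr\bigl(\nu<|I|/3\bigr)\le\exp\bigl(-2(1/e-1/3)^2|I|\bigr)=\exp(-\Omega(|I|)).
\end{equation*}
Averaging over the conditioning yields \cref{eq12}.

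For the implication, work on $E_{k,1,\lambda}$ and pick some $A$ with $|\Per(M_A)|\ge\lambda$; this choice is a function of $M^{(k)}$ only. Set $I=[n]-A$, so $|I|=n-k>\epsilon n$. On the event $\nu\ge|I|/3$, the sets $A\cup\{i\}$ for $i$ counted in $\nu$ are distinct $\lambda$-heavy $(k+1)$-minors, and there are at least $(n-k)/3\ge\epsilon n/3$ of them. Hence $E_{k+1,\epsilon n/3,\lambda}$ holds with conditional probability $1-\exp(-\Omega(\epsilon n))$.

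The only delicate point is the dependence through the shared entries, and conditioning on them resolves it as described. The remainder is the routine Chernoff computation.
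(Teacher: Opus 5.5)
Your proposal is correct and follows the same route as the paper. You write $\nu(A,I)$ as a sum of indicators, each occurring with probability at least $1/e$ by \cref{lemma:large_parent_has_large_child}, apply a Chernoff bound, and then use $|I| = n-k > \epsilon n$ and the distinctness of the children $A\cup\{i\}$. Your extra step of conditioning on the shared entries $m_{k+1,j}$, $j\in A$, is a welcome refinement: it justifies the independence of the indicators, which the paper only asserts from the independence of the $m_{k+1,i}$, $i\in I$.
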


\begin{proof}
Enumerate $I = \{i_j\}_{j=1}^{|I|}$ and let $X_j := \mathbf{1}[|\Per(M_{A \cup \{i_j\}})| \ge |\Per(M_A)|]$, so that $\nu(A,I) = \sum_j X_j$. By \cref{lemma:large_parent_has_large_child}, $\Pr(X_j = 1) \ge 1/e$, and independence of $a_{k+1, i_j}$ across $j$ makes the $X_j$ independent, hence $\mu := \mathbf{E}[\nu(A,I)] \ge |I|/e$. The Chernoff bound with $\delta = 1 - e/3$ gives
\begin{equation}
    \Pr(\nu(A,I) < |I|/3) \le \exp(-\mu \delta^2/2) \le \exp(-\Omega(|I|)) \, ,
\end{equation}
which is the first claim. The second follows from $|I| = n - k > \epsilon n$.
\end{proof}

The following proposition maintains the count of heavy minors as $k$ grows (cf.\ \cite[Proposition~3.2]{tao2009permanent}).
\begin{proposition}[\cref{maintainingmanylargeminors_main} restated]
\label{maintainingmanylargeminors}
    Let $1 \le k \le ( 1 - \epsilon ) n$ for some $\epsilon > 0$, let $N \ge 1$ and let $\lambda > 0$. 
    Then we have
    \begin{equation}
        \Pr ( E_{k + 1, \epsilon N / 6, \lambda} | E_{k, N, \lambda} ) \ge 1 - 2\exp ( - \Omega ( \epsilon n ) ) \ .
    \end{equation}
\end{proposition}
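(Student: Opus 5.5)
Condition on the first $k$ rows $M^{(k)}$ lying in $E_{k,N,\lambda}$, and fix a collection $\mathcal{A}$ of exactly $\lceil N\rceil$ $\lambda$-heavy $k$-minors. The plan mirrors Tao--Vu's Proposition~3.2: we show that, after exposing row $k+1$, most $A\in\mathcal{A}$ have at least $\epsilon n/3$ heavy children. We then count the resulting $(k+1)$-minors via double counting.

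\textbf{Step 1 (single parent).} For each $A\in\mathcal{A}$ take $I=[n]\setminus A$, so $|I|=n-k\ge\epsilon n$. Let $F_A$ be the event $\nu(A,I)\ge |I|/3$. By \cref{lemma:growing_many_large_minors}, conditional on $M^{(k)}$,
\begin{equation*}
\Pr(F_A)\ge 1-\delta, \qquad \delta:=\exp(-\Omega(\epsilon n)).
\end{equation*}
On $F_A$ the set $A$ has at least $(n-k)/3\ge\epsilon n/3$ children $A\cup\{i\}$ with $|\Per(M_{A\cup\{i\}})|\ge|\Per(M_A)|\ge\lambda$, so each of these children is $\lambda$-heavy.

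\textbf{Step 2 (most parents succeed).} The events $F_A$ for different $A$ are dependent, since they all use the same row. We therefore apply the First Moment \cref{lemma:first_moment} with $c=1/2$. This shows that with probability at least $1-2\delta$, at least $|\mathcal{A}|/2\ge N/2$ of the parents satisfy $F_A$. Since $2\delta = 2\exp(-\Omega(\epsilon n))$, this matches the claimed failure probability.

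\textbf{Step 3 (double counting).} On this event, the number of pairs (parent $A$, heavy child $B\supset A$) is at least $(N/2)(\epsilon n/3)=\epsilon nN/6$. Each $(k+1)$-set $B$ has at most $k+1\le n$ parents $B\setminus\{j\}$. Hence the number of distinct $\lambda$-heavy $(k+1)$-minors is at least $\epsilon nN/(6n)=\epsilon N/6$, which is exactly $E_{k+1,\epsilon N/6,\lambda}$. Averaging over $M^{(k)}\in E_{k,N,\lambda}$ then gives the conditional bound.

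\textbf{Main obstacle.} The delicate points are the dependence between the $F_A$ and the ceiling convention for non-integer $N$. The dependence is handled entirely by the first moment lemma, which needs no independence. For the ceiling, we need $\lceil N\rceil/2\cdot(n-k)/3/(k+1)\ge\epsilon N/6$, which holds because $(n-k)/(k+1)\ge \epsilon n/n$ up to the mild slack from $k\le(1-\epsilon)n$. If the constant is tight, a factor slightly smaller than $1/2$ in Step~2 absorbs the slack while only changing the implicit constant in $\Omega$.
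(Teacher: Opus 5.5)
Your proposal is correct and follows the paper's proof essentially step for step. Like the paper, you apply \cref{lemma:growing_many_large_minors} to each parent, use the first-moment lemma with $c=1/2$ to get at least $N/2$ good parents with probability $1-2\exp(-\Omega(\epsilon n))$, and double count using that each $(k+1)$-minor has at most $n$ parents. The hedging in your last paragraph is unnecessary: $n-k\ge\epsilon n$ and $k+1\le n$ hold exactly, so $\tfrac{\lceil N\rceil}{2}\cdot\tfrac{\epsilon n}{3}\cdot\tfrac{1}{n}\ge \tfrac{\epsilon N}{6}$ with no slack to absorb.
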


\begin{proof}
    Assume $E_{k, N, \lambda}$ holds: there exist $N$ $\lambda$-heavy $k$-minors $A_1, \ldots, A_N \in \binom{[n]}{k}$, each admitting at least $\epsilon n$ children since $k \le (1-\epsilon)n$.
    Let $G_i$ be the event that $A_i$ has at least $\epsilon n/3$ $\lambda$-heavy children; \cref{lemma:growing_many_large_minors} yields $\Pr(G_i) \ge 1 - \exp(-\Omega(\epsilon n))$.
    Applying \cref{lemma:first_moment} to $\{G_i\}_{i=1}^N$ with $c = 1/2$, $\delta = \exp(-\Omega(\epsilon n))$ gives that at least $N/2$ of the $G_i$ hold with probability $\ge 1 - 2\exp(-\Omega(\epsilon n))$.

    Form the bipartite graph with parents $V_P = \{A_i : G_i \text{ holds}\}$ and children $V_C = \{\lambda\text{-heavy } (k+1)\text{-minors}\}$, edges linking parent-child pairs. On the high-probability event above, $|V_P| \ge N/2$. Each vertex in $V_P$ contributes at least $\epsilon n/3$ edges and each $(k+1)$-minor has at most $n$ parents, so $|V_C| \ge |V_P|\epsilon/3 \ge \epsilon N/6$, yielding $\Pr(E_{k+1, \epsilon N/6, \lambda} \mid E_{k, N, \lambda}) \ge 1 - 2\exp(-\Omega(\epsilon n))$.
\end{proof}

The following proposition, analogous to Proposition 3.3 in Ref.~\cite{tao2009permanent}, shows that either $N$ or $\lambda$ can be improved when growing $k$ to $k+1$, at the cost of a higher failure rate. The proof follows Ref.~\cite{tao2009permanent} with constants adapted to the Gaussian distribution.

\begin{proposition}[\cref{lemma:growingmanylargeminors_main} restated]
\label{prop:growingmanylargeminors}
    Let $1 \le k \le ( 1 -  \epsilon ) n$ for some $ \epsilon > 0$, let $N \ge 1$, let $0 < c < 1$, and let $\lambda > 0$. 
    We can partition the event $E_{k, N, \lambda}$ as $E_{k, N, \lambda, c}' \vee E''_{k, N, \lambda, c}$, where the events $E_{k, N, \lambda, c}'$ and $E''_{k, N, \lambda, c}$ depend only on the first $k$ rows of $M$.
    Then, we have
    \begin{equation}
    \Pr ( E_{k + 1, n^c N, \lambda}  | E'_{k, N, \lambda, c}
    ) \ge \frac{1}{3}
    \end{equation}
    and
    \begin{equation}
    \Pr ( E_{k + 1,  \epsilon N / 4, n^{1 / 2 - c} \lambda}  |
    E''_{k, N, \lambda, c} ) \ge 1 - n^{- c / 4} \ .
    \end{equation}
\end{proposition}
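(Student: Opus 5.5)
We follow the double-counting scheme of Tao and Vu's Proposition~3.3, with \cref{littlewoodinequality} and \cref{lemma:large_parent_has_large_child} substituted for the discrete tools.

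\textbf{Setup.} Condition on the first $k$ rows and assume $E_{k,N,\lambda}$ holds. Fix $N$ $\lambda$-heavy $k$-minors $A_1,\dots,A_N$, chosen by a deterministic rule from $M^{(k)}$. Form the bipartite graph whose parents are the $A_\ell$ and whose children are the sets $B\in\binom{[n]}{k+1}$ containing at least one $A_\ell$. Join $A_\ell$ to $B$ if $A_\ell\subset B$, and let $d(B)$ be the number of heavy parents of $B$.

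Since $k\le(1-\epsilon)n$, each parent has at least $\epsilon n$ children, so the graph has at least $\epsilon nN$ edges. Each child has at most $k+1\le n$ parents, so $d(B)\le n$. Set the threshold $T:=n^{1-3c/2}$. Define $E'_{k,N,\lambda,c}$ to be the event that at least half of the edges land in children with $d(B)<T$, and $E''_{k,N,\lambda,c}$ its complement within $E_{k,N,\lambda}$. Both are determined by the first $k$ rows, and they partition $E_{k,N,\lambda}$.

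\textbf{Case $E''$ (magnitude growth).} At least $\epsilon nN/2$ edges go into children with $d(B)\ge T$. Since $d(B)\le n$, there are at least $\epsilon N/2$ such children.

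For each such $B$, cofactor expansion along row $k+1$ gives
\[
\Per(M_B)=\sum_{j\in B} m_{k+1,j}\Per(M_{B\setminus\{j\}}),
\]
and at least $d(B)\ge T$ of these coefficients have modulus $\ge\lambda$. Apply \cref{littlewoodinequality} with $r=n^{1/2-c}$:
\[
\Pr\bigl(|\Per(M_B)|<n^{1/2-c}\lambda\bigr)=O\!\left(\frac{n^{1-2c}}{n^{1-3c/2}}\right)=O(n^{-c/2}).
\]
Now apply \cref{lemma:first_moment} with constant $1/2$ to these events. With probability at least $1-O(n^{-c/2})\ge 1-n^{-c/4}$ (for $n$ large), at least half of these children are $n^{1/2-c}\lambda$-heavy. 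That is at least $\epsilon N/4$ of them, which gives $E_{k+1,\epsilon N/4,n^{1/2-c}\lambda}$.

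\textbf{Case $E'$ (population growth).} At least $\epsilon nN/2$ edges go into children with $d(B)<T$. Hence there are at least
\[
L\ge \frac{\epsilon nN}{2T}=\frac{\epsilon}{2}n^{3c/2}N
\]
such children. If $c>2/3$, then $T<1$ and this case is empty, so there is nothing to prove.

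Each such child has a $\lambda$-heavy parent. By \cref{lemma:large_parent_has_large_child}, each is $\lambda$-heavy with probability at least $1/e$. Let $X\le L$ count the $\lambda$-heavy ones among them, so $\Ex X\ge L/e$. The reverse Markov inequality gives, for $a:=n^cN$,
\[
\Pr(X\ge a)\ge \frac{\Ex X-a}{L-a}\ge \frac1e-\frac{a}{L}\ge\frac1e-\frac{2}{\epsilon}n^{-c/2}.
\]
This is at least $1/3$ for $n$ large, since $1/e>1/3$. This gives $E_{k+1,n^cN,\lambda}$.

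\textbf{Main obstacle.} The delicate point is choosing the threshold $T$ so that both cases close simultaneously. In case $E''$, the Littlewood--Offord failure probability $n^{1-2c}/T$ must be $o(n^{-c/4})$. In case $E'$, the ratio $n^cT/n$ must be $o(1)$. The choice $T=n^{1-3c/2}$ satisfies both, at the cost of requiring $n$ sufficiently large depending on $\epsilon$ and $c$.

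A minor remaining detail is that the first-moment step in case $E''$ is applied conditionally on the first $k$ rows. This is legitimate because all the events involved concern only row $k+1$, while the coefficients $\Per(M_{B\setminus\{j\}})$ are fixed by the conditioning.
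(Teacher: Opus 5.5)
Your proposal is correct and follows the paper's argument. It uses the same parent--child double counting, the same split according to whether children have few or many heavy parents, \cref{lemma:large_parent_has_large_child} with a Markov-type bound in the population case, and \cref{littlewoodinequality} with \cref{lemma:first_moment} in the magnitude case. The only differences are cosmetic: you take the threshold $T=n^{1-3c/2}$ instead of the paper's $K=\lfloor\epsilon\gamma n^{1-c}\rfloor$, which gives polynomial slack $n^{-c/2}$ in both cases rather than relying on the tuned constant $\gamma$, at the cost of needing $n$ large in case $E'$ too; your reverse-Markov step is just \cref{lemma:first_moment} applied to the failures.
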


\begin{proof}
    Fix $k,  \epsilon, N, c, \lambda$.
    We condition on the first $k$ rows of $M$ and assume that $E_{k, N, \lambda}$ holds, which means that we can thus find $N$ distinct $\lambda$-heavy $k$-minors labelled by $A_1, \ldots, A_N \in \binom{[n]}{k}$.
    For each $l \ge 1$, let $F_l$ denote the number of $A' \in \binom{[n]}{k+1}$ which have exactly $l$ parents in the set $\{ A_1, \ldots, A_N \}$.
    Note that $A'$ is of the form $A_i \cup \{j\}$.
    Now, consider a bipartite graph consisting of two sets of vertices $V_P$ and $V_C$, where $V_P = \{A_1, \ldots, A_N\}$ and $V_C = \{ A_i \cup \{j\} : i \in [N], j \in [n]-A_i \}$.
    An edge is drawn between $A_i \in V_P$ and $A' = A_i \cup \{j\} \in V_C$.    
    Each $A'$ has at most $k+1 < n$ parents in $\{ A_1, \ldots, A_N \}$, which implies that
    \begin{equation}
        \# \mathrm{edges} = \sum_{l = 1}^n l F_l \ ,
    \end{equation}
    where $F_l = 0$ for $l > k+1$.
    On the other hand, since each $A_j$ has at least $ \epsilon n$ children, we have
    \begin{equation}
        \# \mathrm{edges} \ge  \epsilon n N \ .
    \end{equation}
    Combining these, we have
    \begin{equation}
        \sum_{l = 1}^n l F_l \ge  \epsilon n N \ .
    \end{equation}

    Set $K := \lfloor (\epsilon \gamma) n^{1-c} \rfloor$ with $\gamma := \frac{1}{2}(1 - \frac{3(e-1)}{2e}) \approx 0.026$. Bounding the coefficient $l$ by $K$ for $l \le K$ and by $n$ otherwise gives $K(F_1 + \cdots + F_K) + n(F_{K+1} + \cdots + F_n) \ge \sum_l l F_l \ge \epsilon n N$, so one of the following holds:
    \begin{equation}\label{eq:event_E'}
        F_1 + \cdots + F_K \ge \frac{ \epsilon n N}{2K}
    \end{equation}
    or
    \begin{equation}\label{eq:event_E''}
        F_{K + 1} + \cdots + F_n \ge \frac{ \epsilon N}{2} \ .
    \end{equation}
    We then let $E_{k, N, \lambda, c}'$ be the event that \cref{eq:event_E'} holds and $E_{k, N, \lambda, c}''$ be the event that \cref{eq:event_E'} fails but \cref{eq:event_E''} holds.

    \emph{Case $E'_{k,N,\lambda,c}$.}
    By \cref{eq:event_E'} there are at least $\epsilon n N/(2K)$ elements $A' \in \binom{[n]}{k+1}$, each with at least one parent in $\{A_1,\ldots,A_N\}$; by \cref{lemma:large_parent_has_large_child} each such $A'$ is $\lambda$-heavy with probability $\ge 1/e$. Applying \cref{lemma:first_moment} with $(m,\delta,c) = (\epsilon n N/(2K), (e-1)/e, 3(e-1)/(2e))$ and using $(\epsilon n N)/(2K)(1 - 3(e-1)/(2e)) \ge n^c N$ gives
    \begin{equation}
        \Pr(\text{at least } n^c N \text{ such } A' \text{ are } \lambda\text{-heavy} \mid E'_{k,N,\lambda,c}) \ge \tfrac{1}{3} \, .
    \end{equation}

    \emph{Case $E''_{k,N,\lambda,c}$.}
    By \cref{eq:event_E''} there are at least $\epsilon N/2$ elements $A' \in \binom{[n]}{k+1}$, each with at least $K$ parents in $\{A_1,\ldots,A_N\}$. Cofactor expansion gives $\Per(M_{A'}) = a_1 v_1 + \cdots + a_{k+1} v_{k+1}$ with $a_i \sim \calG$ independent and at least $K$ of the $|v_i| \ge \lambda$, so \cref{littlewoodinequality} yields $\Pr(|\Per(M_{A'})| \le n^{1/2-c}\lambda) \le O(n^{1-2c}/K)$. \cref{lemma:first_moment} then gives, for $n > (1/\epsilon)^{4/(3c)}$, that at least $\epsilon N/4$ of the $A'$ are $(n^{1/2-c}\lambda)$-heavy with probability $\ge 1 - n^{-c/4}$.
\end{proof}

Based on the above propositions, we now describe the algorithmic construction for Step 2 and prove the growth-phase result.
The construction follows the same strategy as the one on page 662 of Ref.~\cite{tao2009permanent}, adapted for random Gaussian matrices using \cref{maintainingmanylargeminors,prop:growingmanylargeminors}.

Fix a desired $\alpha > 0$, choose $\epsilon > 0$ sufficiently smaller than $\alpha$, and choose $\epsilon' > 0$ with $\frac{1}{6}\epsilon > \epsilon' > 0$ sufficiently small compared to $\epsilon$.
Let $k_0 = \floor{\epsilon n}$ and $k_1 = \floor{(1-\epsilon)n}$.
Starting from $E_{k_0, 1, 1}$ (established by \cref{prop:step_1}), we grow $k$ from $k_0$ to $k_1$ while tracking the state $(N_k, \lambda_k)$, initialized with $N_{k_0} = 1$ and $\lambda_{k_0} = 1$.

\paragraph{The four types of steps.}
At each step $k \to k+1$ (for $k_0 \leq k < k_1$), we first apply \cref{maintainingmanylargeminors} to ensure that $E_{k+1, \epsilon N_k / 6, \lambda_k}$ holds, which succeeds with probability at least $1 - 2\exp(-\Omega(\epsilon n))$.
Conditioned on this success, we then apply \cref{prop:growingmanylargeminors} with $c = \epsilon'$ to attempt either a population increase or a magnitude increase.
Depending on which events occur, the step falls into one of four types:

\begin{itemize}
    \item \textbf{Type I (Population growth):} The event $E'_{k, N_k, \lambda_k, \epsilon'}$ holds and $E_{k+1, n^{\epsilon'} N_k, \lambda_k}$ occurs. We set $N_{k+1} := n^{\epsilon'} N_k$ and $\lambda_{k+1} := \lambda_k$.

    \item \textbf{Type II (Maintenance under population regime):} The event $E'_{k, N_k, \lambda_k, \epsilon'}$ holds but the population growth $E_{k+1, n^{\epsilon'} N_k, \lambda_k}$ fails. By \cref{maintainingmanylargeminors}, we still have $E_{k+1, \epsilon N_k / 6, \lambda_k}$. We set $N_{k+1} := \epsilon N_k / 6$ and $\lambda_{k+1} := \lambda_k$.

    \item \textbf{Type III (Maintenance under magnitude regime):} The event $E''_{k, N_k, \lambda_k, \epsilon'}$ holds but the magnitude growth $E_{k+1, \epsilon N_k / 4, n^{1/2 - \epsilon'} \lambda_k}$ fails. By \cref{maintainingmanylargeminors}, we still have $E_{k+1, \epsilon N_k / 6, \lambda_k}$. We set $N_{k+1} := \epsilon N_k / 6$ and $\lambda_{k+1} := \lambda_k$.

    \item \textbf{Type IV (Magnitude growth):} The event $E''_{k, N_k, \lambda_k, \epsilon'}$ holds and $E_{k+1, \epsilon N_k / 4, n^{1/2 - \epsilon'} \lambda_k}$ occurs. We set $N_{k+1} := \epsilon N_k / 4$ and $\lambda_{k+1} := n^{1/2 - \epsilon'} \lambda_k$.
\end{itemize}

If none of the above holds (i.e., the maintaining step also fails), we declare the algorithm to have failed at step $k$.
The algorithm outputs ``success'' if it completes all $k_1 - k_0$ steps without failure and $\lambda_{k_1} \geq n^{(1/2 - \alpha)n}$.

\paragraph{The potential function.}
To analyze the algorithm, we introduce a potential function $W_k$ for $k_0 \leq k \leq k_1$, defined recursively by $W_{k_0} := 0$ and
\begin{equation}
\label{eq:potential_Wk}
    W_{k+1} := W_k + \left(1 - \frac{\epsilon}{2}\right) - 3\, \mathbf{I}_{k, \text{type I}} - \mathbf{I}_{k, \text{type IV}} \, ,
\end{equation}
where $\mathbf{I}_{k, \text{type I}}$ and $\mathbf{I}_{k, \text{type IV}}$ are the indicator functions that step $k$ is of Type I or Type IV, respectively. Since Types II and III contribute $0$ to the indicator sum, their increments are $(1 - \epsilon/2)$.

\begin{proposition}[Success probability]\label{prop:success_prob}
    The probability for the algorithm to be successful is at least $1 - \exp(-\Omega(\epsilon' n))$.
\end{proposition}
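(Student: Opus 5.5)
The plan is to follow the Tao--Vu bookkeeping. First we show $W_k$ is, up to negligible error, a supermartingale with bounded increments. Then we use \cref{lemma:supermartingale} to show $W_{k_1}$ is not much larger than $W_{k_0}=0$. Finally we show that a small final potential forces enough Type IV steps to give $\lambda_{k_1}\ge n^{(1/2-\alpha)n}$.

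\emph{Supermartingale step.} Let $\mathcal{F}_k$ be generated by the first $k$ rows. By construction, $E'_{k,N_k,\lambda_k,\epsilon'}$ and $E''_{k,N_k,\lambda_k,\epsilon'}$ are $\mathcal{F}_k$-measurable. On $E'$, \cref{prop:growingmanylargeminors} gives a Type I step with conditional probability at least $1/3$, so $\Ex[W_{k+1}-W_k\mid\mathcal{F}_k]\le 1-\epsilon/2-3\cdot\tfrac13=-\epsilon/2\le 0$. On $E''$, a Type IV step occurs with probability at least $1-n^{-\epsilon'/4}$, so the conditional drift is at most $1-\epsilon/2-(1-n^{-\epsilon'/4})\le 0$ for $n$ large.

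Failure of the maintenance step has probability at most $2\exp(-\Omega(\epsilon n))$ per step by \cref{maintainingmanylargeminors}. We handle it by freezing the process after failure (set increments to $0$), which preserves the supermartingale property. A union bound over at most $n$ steps then costs only $\exp(-\Omega(\epsilon n))$. Increments are bounded by $c=3$, so \cref{lemma:supermartingale} gives $\Pr(W_{k_1}\ge \epsilon' n)\le\exp(-\Omega(\epsilon'^2 n))$. The statement claims $\exp(-\Omega(\epsilon' n))$; I would either absorb this by treating $\epsilon'$ as a fixed constant, or pick the deviation threshold as $\epsilon n/8$ to get a cleaner bound.

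\emph{Counting step.} On the good event, let $a,b,d$ be the numbers of Type I, Type IV, and Types II/III steps, with $a+b+d=k_1-k_0=:T\approx(1-2\epsilon)n$. Then $W_{k_1}=(1-\epsilon/2)T-3a-b$. We need two facts:
\begin{itemize}
\item \emph{Upper bound on $a$.} Since $N_k\le\binom{n}{k}\le 2^n$ and each Type I step multiplies $N$ by $n^{\epsilon'}$, we get $a\,\epsilon'\log n\le n\log 2+(b+d)\log(6/\epsilon)$. Hence $a=O(n/(\epsilon'\log n))=o(n)$, because $\epsilon'$ is fixed and $n$ is large.
\item \emph{Lower bound on $b$.} From $W_{k_1}\le \epsilon' n$ we get $3a+b\ge (1-\epsilon/2)T-\epsilon' n$, so $b\ge (1-\epsilon/2)(1-2\epsilon)n-\epsilon' n-o(n)$.
\end{itemize}
Consequently $\lambda_{k_1}\ge n^{(1/2-\epsilon')b}\ge n^{(1/2-\alpha)n}$, provided $\epsilon,\epsilon'$ are small relative to $\alpha$ and $n$ is large.

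We also need $N_k\ge 1$ throughout, since the events require an integer count at least $\lceil N\rceil$. Otherwise the propositions, which assume $N\ge1$, cannot be applied. This is a genuine issue, because Types II--IV shrink $N$ by a factor of $\epsilon/6$ or $\epsilon/4$. I would resolve it by noting that any $N<1$ can be replaced by $1$ at the next step via \cref{maintainingasinglelargeminor}, or by reading the events with the ceiling convention. Either way the same potential accounting goes through.

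\emph{Main obstacle.} The main obstacle is the bookkeeping that reconciles the shrinking of $N$ with the cap $N\le 2^n$. The potential function charges Type I steps weight $3$ exactly so that the at most $o(n)$ possible Type I steps cannot absorb the drift. The real difficulty is verifying that, under the freezing convention, the Azuma bound still applies, and that the constants $(1-\epsilon/2)$, $\epsilon$, $\epsilon'$ close the inequality $b\,(1/2-\epsilon')\ge(1/2-\alpha)n$.
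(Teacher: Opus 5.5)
Your proposal is correct and follows the paper's proof. It uses the same potential $W_k$, the same two-regime drift computation (drift $\le -\epsilon/2$ on $E'$ and $\le n^{-\epsilon'/4}-\epsilon/2$ on $E''$), \cref{lemma:supermartingale} with $c=3$, and a union bound over maintenance failures; to that union bound you should add the $\exp(-\Omega(n))$ failure probability of the starting event $E_{k_0,1,1}$ from \cref{prop:step_1}. The $\exp(-\Omega(\epsilon'^2 n))$ versus $\exp(-\Omega(\epsilon' n))$ mismatch you flag also occurs in the paper's own proof, and it is harmless for fixed $\epsilon'$.

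Your counting step is what the paper separates out as \cref{prop:success_result}, and your treatment is slightly more careful in two places. Your freezing convention makes $W_k$ well defined after a failure. Your bound $a\,\epsilon'\log n \le n\log 2 + (b+d)\log(6/\epsilon)$ keeps the $\log(6/\epsilon)$ losses from non-Type-I steps, which the paper's stated $S_I \le n\log 2/(\epsilon'\log n)$ drops; both still give $a=o(n)$.
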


\begin{proof}
The algorithm succeeds if two conditions hold simultaneously:
\begin{enumerate}
    \item[(I)] Every step $k$ from $k_0$ to $k_1$ completes without failure (i.e., the maintaining step always succeeds).
    \item[(II)] $W_{k_1} \leq \epsilon' n / 2$.
\end{enumerate}

\emph{Condition I.}
At $k = k_0$, the failure probability is $\exp(-\Omega(n))$ by \cref{prop:step_1}.
At each subsequent step $k_0 < k \leq k_1$, the only way to fail is if $E_{k+1, \epsilon N_k/6, \lambda_k}$ does not hold given $E_{k, N_k, \lambda_k}$. By \cref{maintainingmanylargeminors}, this failure probability is at most $2\exp(-\Omega(\epsilon n))$.
Taking a union bound over all $k_1 - k_0 + 2 \leq n$ steps:
\begin{align}
    \Pr(\text{Condition I holds})
    &\geq \left[1 - 2\exp(-\Omega(\epsilon n))\right]^n \nonumber \\
    &\geq 1 - \exp(-\Omega(\epsilon' n)) \, .
\end{align}

\emph{Condition II.}
We show that $W_k$ satisfies the supermartingale-type property
\begin{equation}
    \Ex[W_{k+1} \mid M^{(k)}] \leq W_k \, .
\end{equation}
Indeed, expanding the definition gives
\begin{align}
    W_k - \Ex[W_{k+1} \mid M^{(k)}]
    &= 3\, \Ex[\mathbf{I}_{k, \text{type I}} \mid M^{(k)}] \nonumber \\
    &\quad + \Ex[\mathbf{I}_{k, \text{type IV}} \mid M^{(k)}] \nonumber \\
    &\quad - \left(1 - \tfrac{\epsilon}{2}\right) \, .
\end{align}
Given $M^{(k)}$, exactly one of the two regimes $E'_{k, N_k, \lambda_k, \epsilon'}$ or $E''_{k, N_k, \lambda_k, \epsilon'}$ holds. We verify the inequality in each case.

\emph{Case: magnitude regime ($E''_k$).}
In this case, Type I cannot occur, so $\mathbf{I}_{k, \text{type I}} = 0$.
By \cref{prop:growingmanylargeminors}, $\Pr(\text{Type IV step} \mid E''_k) \geq 1 - n^{-\epsilon'/4}$.
Therefore:
\begin{align}
    W_k - \Ex[W_{k+1} \mid M^{(k)}]
    &= 0 + (1 - n^{-\epsilon'/4}) - 1 + \tfrac{\epsilon}{2} \nonumber \\
    &= \tfrac{\epsilon}{2} - n^{-\epsilon'/4} \geq 0
\end{align}
for $n$ sufficiently large.

\emph{Case: population regime ($E'_k$).}
In this case, Type IV cannot occur, so $\mathbf{I}_{k, \text{type IV}} = 0$.
By \cref{prop:growingmanylargeminors}, $\Pr(\text{Type I step} \mid E'_k) \geq 1/3$.
Therefore:
\begin{align}
    W_k - \Ex[W_{k+1} \mid M^{(k)}] &\geq 3 \cdot \frac{1}{3} + 0 - 1 + \frac{\epsilon}{2} = \frac{\epsilon}{2} \geq 0 \, .
\end{align}
Moreover, $|W_{k+1} - W_k| \leq 3$ deterministically (since each indicator is either 0 or 1).
Applying \cref{lemma:supermartingale} with $c = 3$ and $m = k_1 - k_0$:
\begin{equation}
    \Pr\left(W_{k_1} \geq \frac{\epsilon' n}{2}\right) \leq \exp\left(-\Omega(\epsilon'^2 n)\right) \, .
\end{equation}
Combining Conditions I and II completes the proof.
\end{proof}

\begin{proposition}[Success result]\label{prop:success_result}
    If the algorithm is successful, then $E_{k_1, 1, n^{(1/2 - \alpha)n}}$ holds.
\end{proposition}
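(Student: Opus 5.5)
Since success means Condition~I held, every step from $k_0$ to $k_1$ was of Types~I--IV. By construction, the event $E_{k, N_k, \lambda_k}$ then holds for every $k_0 \le k \le k_1$: it holds at $k_0$ by \cref{prop:step_1}, and at each step the state is updated only to an event that was verified to occur. Hence $E_{k_1, N_{k_1}, \lambda_{k_1}}$ holds, and it suffices to show that $N_{k_1} \ge 1$ and $\lambda_{k_1} \ge n^{(1/2-\alpha)n}$, because $E_{k,N,\lambda}$ is monotone (decreasing $N$ or $\lambda$ weakens it). The plan is to count the steps of each type. Let $a, b, d$ denote the numbers of Type~I, Type~II/III, and Type~IV steps, with $a + b + d = m := k_1 - k_0 \approx (1-2\epsilon)n$. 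Summing the potential recursion \cref{eq:potential_Wk} gives
\begin{equation}
    W_{k_1} = (1 - \tfrac{\epsilon}{2}) m - 3a - d \le \tfrac{\epsilon' n}{2} \, .
\end{equation}
This is the only information we extract from Condition~II.

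Next, track $\log_n$ of both quantities. We have $\log_n \lambda_{k_1} = (\tfrac12 - \epsilon') d$. For the population,
\begin{equation}
    \log_n N_{k_1} = \epsilon' a - b \log_n(6/\epsilon) - d \log_n(4/\epsilon) \, .
\end{equation}
We would also like to keep $N_k \le \binom{n}{k} \le 2^n$ throughout. This is where the weighting in the potential matters. Type~I steps multiply $N$ by $n^{\epsilon'}$, while every other step divides $N$ by only a constant $O(1/\epsilon)$. Because $N_k$ can never exceed $2^n$, the total population growth is capped: $\epsilon' a \log n \le n \log 2 + (b+d)\log(6/\epsilon)$. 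This gives $a = O\!\left(n/(\epsilon' \log n)\right) + O\!\left((b+d)\log(1/\epsilon)/(\epsilon' \log n)\right) = o(n)$ for fixed $\epsilon, \epsilon'$. Here the cap $N_{k+1} \le 2^n$ can be enforced in the algorithm by truncating $N$, since a Type~I claim with $N$ above $\binom{n}{k+1}$ is impossible.

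Substituting $a = o(n)$ into the potential bound gives
\begin{equation}
    d \ge (1 - \tfrac{\epsilon}{2})(1-2\epsilon)n - \tfrac{\epsilon' n}{2} - o(n) \ge (1 - 3\epsilon)n
\end{equation}
for large $n$. Therefore $\log_n \lambda_{k_1} \ge (\tfrac12 - \epsilon')(1-3\epsilon) n \ge (\tfrac12 - \alpha)n$, provided $\epsilon, \epsilon'$ are small relative to $\alpha$, as was stipulated.

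The main obstacle is ensuring $N_{k_1} \ge 1$. Since $N$ shrinks by constant factors on non-Type-I steps, the naive balance can make $N_{k_1}$ fall below $1$. I would handle this as Tao and Vu do: floor $N$ at $1$, setting $N_{k+1} := \max(1, \cdot)$. This is legitimate because $E_{k+1,1,\lambda}$ still holds after a successful maintenance or growth step whenever at least one heavy minor exists. Each of the events $E_{k+1, \epsilon N/6, \lambda}$ and $E_{k+1, \epsilon N/4, \cdot}$ with $N \ge 1$ is defined with the ceiling convention, which forces at least one heavy minor. With this floor, $N_k \ge 1$ always, and the final event $E_{k_1, 1, \lambda_{k_1}}$ follows. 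The step needing the most care is checking that the floor and the $2^n$ cap do not break the supermartingale computation in \cref{prop:success_prob}. They do not, because the conditional probabilities in \cref{maintainingmanylargeminors,prop:growingmanylargeminors} hold for every $N \ge 1$.
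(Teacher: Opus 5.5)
Your proposal is correct and follows the paper's argument. You sum the potential recursion to get $3S_I + S_{IV} \ge (1-\epsilon/2)(k_1-k_0) - \epsilon' n/2$, use $N_k \le \binom{n}{k} \le 2^n$ to force $S_I = o(n)$, and conclude $\lambda_{k_1} \ge n^{(1/2-\epsilon')(1-3\epsilon)n} \ge n^{(1/2-\alpha)n}$; your accounting of the constant-factor decrements is in fact more careful than the paper's displayed bound $S_I \le n\log 2/(\epsilon'\log n)$, while the flooring of $N$ at $1$ is unnecessary here, since under the paper's ceiling convention $E_{k_1,N_{k_1},\lambda_{k_1}}$ with any $N_{k_1}>0$ already guarantees one $\lambda_{k_1}$-heavy minor.
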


\begin{proof}
Denote $S_I := \sum_{k=k_0}^{k_1-1} \mathbf{I}_{k, \text{type I}}$ and $S_{IV} := \sum_{k=k_0}^{k_1-1} \mathbf{I}_{k, \text{type IV}}$. If the algorithm succeeds, then $W_{k_1} = (k_1 - k_0)(1 - \epsilon/2) - 3 S_I - S_{IV} \le \epsilon' n/2$, i.e.,
\begin{equation}
    3 S_I + S_{IV} \geq (k_1 - k_0)\left(1 - \frac{\epsilon}{2}\right) - \frac{\epsilon' n}{2} \, .
\end{equation}

The number of Type I steps $S_I$ is $o(n)$. Indeed, each Type I step multiplies $N_k$ by a factor of $n^{\epsilon'}$, while any other step decreases $N_k$ by at most a constant factor $\epsilon/6$. Since $N_k \leq \binom{n}{k} \leq 2^n$ for all $k$, we have $S_I \leq \frac{n \log 2}{\epsilon' \log n} = o(n)$.

Therefore $S_{IV} \ge (k_1 - k_0)(1 - \epsilon/2) - \epsilon' n/2 - 3 S_I \ge (1-2\epsilon)(1-\epsilon/2)n - \epsilon' n/2 - o(n) = (1 - 5\epsilon/2 + \epsilon^2)n - \epsilon' n/2 - o(n) \ge (1-3\epsilon)n$ for $\epsilon, \epsilon'$ sufficiently small.
Since each Type IV step multiplies $\lambda_k$ by $n^{1/2 - \epsilon'}$, starting from $\lambda_{k_0} = 1$:
\begin{equation}
    \lambda_{k_1} \geq \left(n^{1/2-\epsilon'}\right)^{(1-3\epsilon)n} = n^{(1/2-\epsilon')(1-3\epsilon)n} \, .
\end{equation}
Since $(1/2 - \epsilon')(1-3\epsilon) \geq 1/2 - 3\epsilon/2 - \epsilon' \geq 1/2 - \alpha$ for $\epsilon$ and $\epsilon'$ sufficiently small relative to $\alpha$, we conclude that $\lambda_{k_1} \geq n^{(1/2-\alpha)n}$.
\end{proof}

Combining \cref{prop:success_prob,prop:success_result}, we obtain the main growth-phase result:
\begin{proposition}\label{proposition:second_step}
Fix a desired $\alpha > 0$ and choose $\epsilon > 0$ sufficiently smaller than $\alpha$. Let $k_1 = \floor{(1 - \epsilon) n}$. Then, we have
    \begin{equation}
        \Pr(E_{k_1,1, n ^{(1/2 - \alpha)n}}) \geq 1 - \exp(-\Omega(\epsilon' n)) \, ,
    \end{equation}
where $\epsilon' > 0$ is sufficiently small compared to $\epsilon$.
\end{proposition}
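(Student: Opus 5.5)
The statement is obtained by chaining \cref{prop:step_1} with \cref{prop:success_prob,prop:success_result}, so the plan is to make this chaining precise rather than to introduce any new probabilistic input. Fix $\alpha>0$ and choose $\epsilon$ small relative to $\alpha$ and $\epsilon'<\epsilon/6$ small relative to $\epsilon$, exactly as in the description of the algorithm. Both $\epsilon$ and $\epsilon'$ are then constants depending only on $\alpha$. The algorithm starts from the state $(N_{k_0},\lambda_{k_0})=(1,1)$, which requires $E_{k_0,1,1}$. \cref{prop:step_1} provides this event with probability $1-\exp(-\Omega(n))$, and this probability is already accounted for as the $k=k_0$ contribution to Condition~I in \cref{prop:success_prob}.

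The main step is to observe that the event ``the algorithm succeeds'' is contained in $E_{k_1,1,n^{(1/2-\alpha)n}}$. This containment is exactly \cref{prop:success_result}. Monotonicity of probability then gives
\begin{equation}
\Pr(E_{k_1,1,n^{(1/2-\alpha)n}})\ \ge\ \Pr(\text{success})\ \ge\ 1-\exp(-\Omega(\epsilon' n)),
\end{equation}
where the last inequality is \cref{prop:success_prob}.

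I would check three details.

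\emph{Failure probabilities.} The per-step failure probabilities $2\exp(-\Omega(\epsilon n))$ sum over at most $n$ steps to $2n\exp(-\Omega(\epsilon n))$, which is $\exp(-\Omega(\epsilon' n))$ for large $n$ since $\epsilon'<\epsilon$. The supermartingale tail from \cref{lemma:supermartingale} is $\exp(-\Omega(\epsilon'^2 n))$. I would absorb it by noting that $\epsilon'$ is a fixed constant, so $\epsilon'^2 n=\Omega(\epsilon' n)$ up to the implied constant depending on $\epsilon'$. Alternatively, one can simply state the bound as $1-\exp(-\Omega(n))$ with constants depending on $\alpha$. A final union bound over Conditions~I and~II then gives the claim.

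\emph{Non-integer counts.} $N_k$ may be non-integer, and it may fall below $1$ after repeated maintenance steps, since $\epsilon/6<1$. I expect this to be the main obstacle. The event $E_{k_1,1,\lambda}$ needs at least one heavy minor, so the count must be at least $\lceil N_{k_1}\rceil$ with $\lceil N_{k_1}\rceil\ge 1$. If $N_k<1$ the event $E_{k,N_k,\lambda_k}$ becomes vacuous and yields nothing.

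To handle this, I would modify the bookkeeping by redefining $N_{k+1}:=\max(1,\text{new value})$. I would then check that \cref{maintainingmanylargeminors} and \cref{prop:growingmanylargeminors} still apply when $N=1$. For \cref{maintainingmanylargeminors} this holds because one heavy parent has at least $\epsilon n/3\ge 1$ heavy children by \cref{lemma:growing_many_large_minors}. For \cref{prop:growingmanylargeminors} the corresponding version with at least $1$ child in case $E''$ holds for large $n$. The bound $S_I=o(n)$ in \cref{prop:success_result} survives this modification, since clamping only increases $N$ and $N_k\le 2^n$ still holds.

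\emph{Exponent.} Finally I would verify the exponent inequality $(1/2-\epsilon')(1-3\epsilon)\ge 1/2-\alpha$ once more. This holds for $3\epsilon/2+\epsilon'\le\alpha$, which completes the statement for $n$ sufficiently large depending on $\alpha$.
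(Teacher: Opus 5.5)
Your proposal is correct and is essentially the paper's argument: the paper obtains this proposition in one line by combining \cref{prop:success_prob} with \cref{prop:success_result}, using exactly the containment-plus-monotonicity step you describe. One correction to your ``main obstacle'': under the paper's stated convention, $E_{k,N,\lambda}$ requires at least $\lceil N\rceil$ heavy minors, so for $0<N_k<1$ it coincides with $E_{k,1,\lambda_k}$ and is not vacuous, which means your clamping $N_{k+1}:=\max(1,\cdot)$ is already built in.
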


The third step is to deal with the last few values of $k$ such that $(1-\epsilon)n \le k \le n$. 
Below, we prove a proposition that with vanishing failure probability, grows $k$ from $k_1$ to $n$ while only drops the magnitude of the permanent of minors by a factor of $n^{- \log n}$. This is called the endgame (proposition 3.4) in Ref.~\cite{tao2009permanent}. We will prove that this endgame proposition also holds for random Gaussian matrices in \cref{sec:endgame}.
\begin{proposition}[Endgame]
  \label{endgame}Let $1 \le k \le ( 1 -  \epsilon )
  n$ for some $ \epsilon > 0$, and let $\lambda > 0$. Then
  \begin{equation}
    \Pr ( E_{n, 1, n^{- \log n}  \lambda}  | E_{k, 1, \lambda}
    ) \ge 1 - n^{- \Omega (1)} \, .
  \end{equation}
\end{proposition}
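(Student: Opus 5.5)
The plan follows Tao and Vu's endgame (their Proposition~3.4). There are two changes. The Littlewood-Offord step becomes \cref{littlewoodinequality}, whose bound $O(r^2/k)$ is stronger than the discrete $O(1/\sqrt{k})$. Any concentration step that used Azuma becomes \cref{lemma:McDiarmid}, applied on a high-probability event where the Gaussian entries are bounded. Set $L := \lceil C\log n\rceil$ for a large constant $C$. The argument has two phases: from $k$ to $n-L$ using only the cheap single-minor lemma, then the last $L$ rows using \cref{littlewoodinequality}.

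\emph{Phase 1 ($k \to n-L$).} Starting from $E_{k,1,\lambda}$, I would apply \cref{maintainingasinglelargeminor} once per row. At size $k'$ the failure probability is at most $\eta^{-(n-k')}$, and summing over $k' \le n-L$ gives a geometric series
\begin{equation}
\sum_{j \ge L} \eta^{-j} = O(\eta^{-L}) = n^{-\Omega(1)} ,
\end{equation}
so $E_{n-L,1,\lambda}$ holds with probability $1-n^{-\Omega(1)}$. A single heavy minor is not enough for the last rows, however. While $n-k' \ge \epsilon n$, I would instead use \cref{lemma:growing_many_large_minors} (and \cref{maintainingmanylargeminors}) to branch the heavy minor into many heavy descendants. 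I would then invoke \cref{lemma:many_heavy_minors}, which is distribution-free, to extract at the level $n-L$ a family $A_1,\dots,A_T$ of $\lambda$-heavy $(n-L)$-minors whose complements $[n]\setminus A_s$ are pairwise disjoint, with $T \ge n^{\Omega(1)}$.

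\emph{Phase 2 (the last $L$ rows).} I would reveal rows $n-L+1,\dots,n$ one at a time. The invariant I aim for is that after $j$ further rows there are still $T_j \ge n^{\Omega(1)}$ heavy $(n-L+j)$-minors, pairwise complement-disjoint, each $\lambda_j$-heavy, with $\lambda_{j+1} \ge \lambda_j / n$. The row-$(n-L+j+1)$ step for one such minor $A'$ grown by a column $i$ is \cref{lemma:large_parent_has_large_child}. To push the final permanent up I would use cofactor expansion along the last revealed row. Conditional on all earlier rows, $\Per$ of each candidate child, and at the very end $\Per(M)$ itself, is a linear form $\sum_i a_i v_i$ in fresh Gaussians. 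Because the complements are disjoint, distinct heavy parents contribute distinct coefficients $v_i$ with $|v_i| \ge \lambda_j$. Hence there are at least $T_j$ heavy coefficients, and \cref{littlewoodinequality} with $r = 1/n$ (or any $r = n^{-O(1)}$) gives failure probability $O(r^2/T_j) = n^{-\Omega(1)}$.

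\emph{Counting the losses.} Each of the $L = O(\log n)$ steps loses a factor $n^{O(1)}$, and choosing the constants appropriately keeps the total loss within the claimed $n^{-\log n}$. A union bound over the $L$ steps keeps the total failure probability $n^{-\Omega(1)}$. To keep $T_j$ from collapsing, I would apply \cref{lemma:first_moment} to the events ``$A_s$ retains a heavy child disjoint from the others''. If a concentration statement about the number of surviving minors is needed, I would apply \cref{lemma:McDiarmid} on the event that all entries of the revealed row have modulus at most $O(\sqrt{\log n})$. That event has probability $1-n^{-\omega(1)}$, and on it each coordinate changes the count by a bounded amount.

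\emph{Main obstacle.} The hardest part is preserving complement-disjointness together with a polynomial number of heavy families through the last $L$ rows. Near $k = n$ each minor has only $L$ children, so the $1/e$ bound of \cref{lemma:large_parent_has_large_child} alone cannot rule out the whole family dying. I would first try to follow Tao--Vu verbatim here, splitting the $L$ spare columns of each $A_s$ into blocks that are disjoint across $s$, so that the events for different $s$ become conditionally independent given the earlier rows. In the final row I would rely on the Gaussian bound's $1/T$ decay rather than any survival argument.
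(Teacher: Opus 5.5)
\paragraph{The gap in Phase~2.} The claim your Phase~2 depends on fails for every row except the last. Suppose $A_1,\dots,A_{T_j}$ are $(n-L+j)$-minors with pairwise disjoint complements $C_s:=[n]\setminus A_s$. A child $A_s\cup\{h\}$ with $h\in C_s$ has complement $C_s\setminus\{h\}$. For $j<L-1$ this set is nonempty and disjoint from every $C_t$ with $t\neq s$. So no $A_t$ with $t\neq s$ lies inside the child, and $A_s$ is its \emph{only} parent in the family.

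In the cofactor expansion of a candidate child you therefore control one coefficient of modulus at least $\lambda_j$, not $T_j$. Complement-disjointness gives many heavy parents only for $[n]$ itself, in the final row. As written, nothing shows that $T_j$ stays polynomially large through the intermediate rows. Your last paragraph only defers this to Tao--Vu; in the paper it is \cref{iteration}, with its split into good and bad children and McDiarmid for the bad ones.

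Your fallback concentration step also fails as described. An entry $a_{\cdot,h}$ of the new row appears in the expansion of every child containing column $h$, which is almost all of them. So the number of heavy children does not have bounded coordinate differences, even on a truncation event.

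\paragraph{The repair.} The fix is short and removes the obstacle entirely. Apply \cref{littlewoodinequality} with a single heavy coefficient ($k=1$) and $r=1/n$: the failure probability is $1-e^{-1/n^2}\le n^{-2}$. Equivalently, given $M^{(k')}$, $\Per(M_{A\cup\{i\}})\sim\CC\calN(0,\sigma^2)$ with $\sigma\ge|\Per(M_A)|$. So one minor is enough:
\begin{itemize}
\item Your own first computation in Phase~1 gives a $\lambda$-heavy $(n-L)$-minor with failure $O(\eta^{-L})$.
\item In each of the last $L$ rows, add an arbitrary unused column. Each step loses a factor $n$ and fails with probability at most $n^{-2}$.
\end{itemize}
A union bound shows $[n]$ is $n^{-L}\lambda$-heavy with failure $O(\eta^{-L})+Ln^{-2}=n^{-\Omega(1)}$. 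Your sentence ``a single heavy minor is not enough'' is a Bernoulli artifact: a $\pm1$ form with one large coefficient can vanish with constant probability, but a Gaussian one cannot. You can drop the branching, \cref{lemma:many_heavy_minors}, the first-moment step and McDiarmid.

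This route is much shorter than the paper's, which copies Tao--Vu's endgame wholesale. The paper builds complement-disjoint $(n-L)$-minors, runs \cref{iteration} losing $n^2$ per row, and finishes with a Littlewood--Offord step using $n^{0.5}$ heavy coefficients.

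\paragraph{The constant.} $L$ rows at a loss of $n$ each cost $n^{L}$, so you need $L\le\log n$, for example $L=\lfloor c\log n\rfloor$ with $0<c\le 1$. A large constant $C$ would overshoot the claimed $n^{-\log n}$. Nothing requires it, since $\eta^{-c\log n}$ is already polynomially small.
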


Applying the endgame (\cref{endgame}) to \cref{proposition:second_step} with $\lambda = n^{(1/2 - \alpha)n}$ chains the bounds: $\Pr(E_{n,1,n^{(1/2-\alpha)n - \log n}}) \ge (1 - n^{-\Omega(1)})(1 - e^{-\Omega(\epsilon' n)}) = 1 - O(n^{-\Omega(1)})$. Since $(1/2 - \alpha)n - \log n = (1/2 - \alpha + o(1))n$, this establishes \cref{thm:weakPACC}.

\section{Proof of the endgame}\label{sec:endgame}

In this section, we prove the endgame proposition (\cref{endgame}) for Gaussian random matrices. The proof follows Ref.~\cite{tao2009permanent}, with the McDiarmid inequality replacing Azuma's inequality as discussed in \cref{proofofproposition3.12}.

Fix $k, n, \lambda$. We condition on $M^{(k)}$ and assume that $E_{k, 1, \lambda}$
holds, thus there is one of the elements of $\binom{[n]}{k}$ that is $\lambda$-heavy. By symmetry, we may assume without loss of generality that $[k]$ is
$\lambda$-heavy. Our goal is to show that $[n]$ is
$n^{- \log n}  \lambda$-heavy with probability $1 - n^{-
\Omega (1)}$. Set $L := \frac{1}{100} \lfloor \log n \rfloor$. We first show that there
are plenty of heavy minors in $\binom{[n]}{n-L}$.

The following  lemma is proved in Ref.~\cite{tao2009permanent}, which is a result about independent events and does not depend on the distribution of the entries at all.
\begin{lemma}[Many heavy minors of order $n-L$]
  Let $B \subset \binom{[n]-[k]}{2L}$. Then with probability $1 - \exp ( - \Omega (L)
   )$, there exists a $\lambda$-heavy minor $A \in \binom{[n]}{n-L}$ which contains $[n] - B$.\label{lemma:many_heavy_minors}
\end{lemma}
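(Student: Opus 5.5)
The plan is a greedy row-exposure process that uses only \cref{lemma:large_parent_has_large_child} and the independence of the entries within each row. We condition on $M^{(k)}$ with $[k]$ $\lambda$-heavy and expose rows $k+1,\dots,n-L$ one at a time. Any $A\in\binom{[n]}{n-L}$ containing $[n]-B$ has the form $([n]-B)\cup B'$ with $B'\subset B$ and $|B'|=L$.

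\textbf{Setup.} We build a chain $[k]=A_k\subset A_{k+1}\subset\cdots\subset A_{n-L}$, with $|A_j|=j$, such that $|\Per(M_{A_{j+1}})|\ge|\Per(M_{A_j})|$ at every step. Then every $A_j$ is $\lambda$-heavy. Write
\[
R_j := ([n]-B)-A_j \quad\text{(required columns still missing)}, \qquad U_j := A_j\cap B \quad\text{(budget already used)}.
\]
We maintain $|U_j|\le L$. Because exactly $n-L-k$ steps are available and $|([n]-B)-[k]|=n-2L-k$, ending with $R_{n-L}=\emptyset$ forces $|U_{n-L}|=L$. That is exactly the desired conclusion.

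\textbf{Greedy rule.} At step $j\to j+1$, after exposing row $j+1$:
\begin{itemize}
\item if some $i\in R_j$ satisfies $|\Per(M_{A_j\cup\{i\}})|\ge|\Per(M_{A_j})|$ (call such $i$ good), add one such $i$;
\item otherwise, if $|U_j|<L$ and some $i\in B-U_j$ is good, add it;
\item otherwise the process fails.
\end{itemize}
When $R_j=\emptyset$ we may only add columns of $B$. When $|U_j|=L$ we may only add required columns; the counting identity above guarantees that $R_j\neq\emptyset$ whenever this happens before the final step.

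\textbf{Single-step estimate.} Conditionally on $M^{(j)}$, the events ``$i$ is good'' for distinct $i$ depend on distinct independent entries $m_{j+1,i}$. By \cref{lemma:large_parent_has_large_child} each has probability at least $1/e$. Exactly as in \cref{maintainingasinglelargeminor}, the probability that no column of a set $I$ is good is at most $\eta^{-|I|}$.

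\textbf{Failure analysis.} Failure can occur in two ways.
\begin{enumerate}
\item Failure at a step with $R_j=\emptyset$ or with both options available. Here at least $|B-U_j|\ge L$ columns of $B$ are all bad, which has probability at most $\eta^{-L}$. Steps with $R_j=\emptyset$ number at most $L$, since each consumes budget. Steps with $R_j\ne\emptyset$ additionally require all of $R_j$ to be bad, so they are controlled by the estimate in item 2 below. Together these contribute $O(L\eta^{-L})+e^{-\Omega(L)}=e^{-\Omega(L)}$.
\item Premature exhaustion of the budget, i.e.\ $|U_j|$ reaches $L$ while $R_j\ne\emptyset$ and then all of $R_j$ is bad. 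The key observation is that a $B$-column is used while $R_j\neq\emptyset$ only on the event $D_j$ that all columns of $R_j$ are bad, with $\Pr(D_j\mid M^{(j)})\le\eta^{-|R_j|}$. Moreover $|R_j|$ never increases, and it decreases by one at every step where $D_j$ fails.
\end{enumerate}

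\textbf{Counting the events $D_j$.} Group steps by the current value $r=|R_j|$. While $|R_j|=r$, the number of steps at which $D_j$ occurs is stochastically dominated by a geometric variable $G_r$ with success-to-leave probability at least $1-\eta^{-r}$. Hence the total number of $D$-events is dominated by $\sum_{r\ge1}G_r$, where the $G_r$ are independent and $\Pr(G_r\ge t)\le\eta^{-rt}$. A standard exponential-moment (Chernoff) bound then gives
\[
\Pr\Bigl(\sum_r G_r\ge L\Bigr)\le e^{-\Omega(L)}.
\]
Whenever fewer than $L$ events $D_j$ occur, the budget is never exhausted while $R_j\ne\emptyset$, and the final-step failure probability is again $\le\eta^{-1}\cdot$(an already-counted $D$-event). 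The same bound controls the $R_j\neq\emptyset$ part of item 1.

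\textbf{Conclusion and main obstacle.} Summing the failure probabilities gives total failure $e^{-\Omega(L)}$, and on success $A_{n-L}\supseteq[n]-B$ is $\lambda$-heavy. I expect the main obstacle to be the domination argument in the previous step. Conditioning through the adaptive greedy choices must be handled carefully: use the filtration generated by the rows, and note that each $D_j$ has conditional probability at most $\eta^{-|R_j|}$ given the past. This is what avoids a naive union bound over $n$ steps, which would cost $n\eta^{-L}$ and is far too large since $L=\Theta(\log n)/100$.
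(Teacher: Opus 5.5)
Your proof is correct, and it does something the paper does not. The paper gives no argument for this lemma: it defers to Tao and Vu with the remark that the statement ``does not depend on the distribution of the entries at all.''

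Your greedy chain works as follows. You always extend by a good required column of $([n]-B)-A_j$ when one exists, and you spend a column of $B$ only on the event $D_j$ that every required column is bad. You then dominate the number of $D$-events, level by level, by independent geometric variables with $\Pr(G_r\ge t)\le\eta^{-rt}$. This makes precise what the paper's remark actually means. The only distributional input is \cref{lemma:large_parent_has_large_child} together with conditional independence across the columns of the newly exposed row. So the Bernoulli constant $1/2$ is replaced by $1/e$ (and $2$ by $\eta$), with no other change.

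The level-by-level domination is also exactly what avoids the naive union bound $n\eta^{-L}$. That bound would be useless here, since $L=\Theta(\log n)$. In short, the citation buys brevity, while your version makes the Gaussian adaptation checkable.

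Two points should be tightened when you write it out.

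First, the events ``$i$ is good'' are not functions of $m_{j+1,i}$ alone. Through the cofactor term they all depend on the shared entries $m_{j+1,l}$, $l\in A_j$. Condition on these as well, as the proof of \cref{lemma:large_parent_has_large_child} does. The events are then conditionally independent, each with conditional probability at least $1/e$, and the bound $\eta^{-|I|}$ follows after averaging over the shared entries.

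Second, a failure at a step with $R_j\neq\emptyset$ and $|U_j|<L$ can occur at the very first $D$-event, so it is not controlled by $\Pr(\sum_r G_r\ge L)$. Instead, use the following fact. Given $M^{(j)}$ and the shared entries, the event that all of $B-U_j$ is bad is independent of $D_j$ and has probability at most $\eta^{-L}$. Hence this failure mode has total probability at most $\eta^{-L}$ times the expected number of $D$-events. That expectation is $O(1)$ by your geometric domination, since $\sum_r \eta^{-r}/(1-\eta^{-r})<\infty$. Your $O(L\eta^{-L})$ term gestures at this, and it should be written out explicitly.

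With these two clarifications the argument gives the claimed $1-e^{-\Omega(L)}$.
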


The following corollary is proved in Ref.~\cite{tao2009permanent} as Corollary 6.2, whose proof is based on \cref{lemma:many_heavy_minors} and the lemma of first moment (\cref{lemma:first_moment}).
\begin{corollary}[Many complement-disjoint heavy minors of order $n-L$]
  We have
  \begin{equation}
    \Pr ( F_{L, \left\lfloor  \epsilon n / 10 L \right\rfloor, \lambda}
    ) = 1 - \exp ( - \Omega (L) ) = 1 - n^{- \Omega (1)} \, .
  \end{equation}
\end{corollary}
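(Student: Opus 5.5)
We will interpret $F_{L, m, \lambda}$ as in Ref.~\cite{tao2009permanent}: conditioned on $M^{(k)}$ with $[k]$ being $\lambda$-heavy, it is the event that there exist $m$ $\lambda$-heavy minors $A_1, \ldots, A_m \in \binom{[n]}{n-L}$ whose complements $[n] - A_1, \ldots, [n] - A_m$ are pairwise disjoint. The plan has three parts: build disjoint blocks, apply \cref{lemma:many_heavy_minors} to each block, and then aggregate the successes with \cref{lemma:first_moment}. The aggregation works even though the block events are not independent.

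\emph{Step 1 (disjoint blocks).} Since $k \le (1-\epsilon)n$, the set $[n] - [k]$ has at least $\epsilon n$ elements. We partition a subset of it into pairwise disjoint blocks $B_1, \ldots, B_{m'} \in \binom{[n]-[k]}{2L}$, where $m' := \lfloor \epsilon n / 2L \rfloor$. Because $L = \Theta(\log n)$, we have $m' \to \infty$ for $n$ large.

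\emph{Step 2 (each block succeeds).} For each $j$, let $G_j$ be the event that some $\lambda$-heavy $A_j \in \binom{[n]}{n-L}$ contains $[n] - B_j$. By \cref{lemma:many_heavy_minors}, which is applied conditionally on $M^{(k)}$ with rows $k+1, \ldots, n-L$ exposed, we get $\Pr(G_j) \ge 1 - \delta$ with $\delta = \exp(-\Omega(L))$. On $G_j$, the complement $[n] - A_j$ is contained in $B_j$. Hence, whenever several $G_j$ hold simultaneously, the corresponding minors automatically have pairwise disjoint complements; in particular they are distinct. This is the key structural point: disjointness comes for free from the disjointness of the blocks, so no independence between the $G_j$ is needed.

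\emph{Step 3 (first moment).} We apply \cref{lemma:first_moment} to $G_1, \ldots, G_{m'}$ with $c = 4/5$. With probability at least $1 - \tfrac{5}{4}\exp(-\Omega(L)) = 1 - \exp(-\Omega(L))$, at least $m'/5$ of the $G_j$ hold. This yields at least $\lceil m'/5 \rceil$ complement-disjoint $\lambda$-heavy $(n-L)$-minors. It remains to check that $m'/5 \ge \lfloor \epsilon n / 10L \rfloor$ up to the rounding in $m'$. If the floors do not line up exactly, we will instead start from $m' = \lfloor \epsilon n/2L \rfloor$ and use $c$ slightly smaller than $4/5$, say $c = 3/4$; the resulting constant loss is absorbed into $\Omega(L)$. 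Finally, $L = \tfrac{1}{100}\lfloor \log n \rfloor$ gives $\exp(-\Omega(L)) = n^{-\Omega(1)}$.

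The main obstacle is bookkeeping rather than probability. We must confirm that \cref{lemma:many_heavy_minors} is applied with the same conditioning on $M^{(k)}$ (with $[k]$ $\lambda$-heavy) for every block, which holds since each $B_j \subset [n] - [k]$. We must also handle the floor and ceiling rounding so that the count $\lfloor \epsilon n/10L \rfloor$ is actually attained.
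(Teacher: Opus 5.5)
Your proposal is correct and follows the route the paper points to (Tao--Vu's Corollary 6.2): partition $[n]-[k]$ into $\lfloor \epsilon n/2L\rfloor$ disjoint $2L$-blocks, apply \cref{lemma:many_heavy_minors} to each block, and aggregate with \cref{lemma:first_moment} at $c=4/5$; the disjointness of the complements comes from the disjointness of the blocks, exactly as you say. Your rounding worry does not arise, because with $c=4/5$ the number of successful blocks is at least $\lceil m'/5\rceil \ge \lceil \epsilon n/10L - 1/5\rceil \ge \lfloor \epsilon n/10L\rfloor$, so you do not need to change $c$.
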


The following lemma is slightly different from the corresponding one in Ref.~\cite{tao2009permanent} and part of its proof is different.
For any integer $N \ge 1$, any $1 \le j \le L$, and any
$\lambda' > 0$, let $F_{j, N, \lambda'}$ denote the event that there exists
$N$ $\lambda'$-heavy minors $A_1, \ldots, A_N \in \binom{[n]}{n-j}$ whose complements $[n] - A_1, [n] - A_2, \ldots, [n] -
A_N$ are disjoint.

\begin{lemma}
  \label{iteration}Let $1 < j \le L$, $N \ge n^{0.5}$ (say), and
  $\lambda' > 0$. Then
  \begin{equation}
    \Pr ( F_{j - 1, \lfloor N / 12 \rfloor, \lambda' / n^2}  |
    F_{j, N, \lambda'} ) \ge 1 - n^{- \Omega (1)}
  \end{equation}
\end{lemma}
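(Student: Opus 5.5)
The plan is to expose one more row and grow each of the $N$ complement-disjoint heavy minors by one column, choosing that column inside its own complement. The anti-concentration of a single complex Gaussian coefficient (\cref{prop:prob_z_circle}) makes each individual failure polynomially unlikely. The First Moment Lemma (\cref{lemma:first_moment}) then handles the correlations caused by the shared row.

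\textbf{Setup.} Condition on the first $n-j$ rows and assume $F_{j,N,\lambda'}$ holds. Fix $N$ $\lambda'$-heavy minors $A_1,\ldots,A_N\in\binom{[n]}{n-j}$ whose complements $C_i:=[n]-A_i$ (each of size $j\ge 2$) are pairwise disjoint. For each $i$, deterministically choose a column $c_i\in C_i$, say the smallest element, and set $A_i':=A_i\cup\{c_i\}\in\binom{[n]}{n-j+1}$. The new complements $C_i-\{c_i\}$ are subsets of the old ones, so they remain pairwise disjoint and have size $j-1$. It therefore suffices to show that, after exposing row $n-j+1$, at least $\lfloor N/12\rfloor$ of the $A_i'$ are $\lambda'/n^2$-heavy.

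\textbf{Single-minor bound.} Expand $\Per(M_{A_i'})$ along the new row, exactly as in the proof of \cref{lemma:large_parent_has_large_child}. Conditioning on all entries of the new row except $m_{n-j+1,c_i}$, the quantity $\Per(M_{A_i'})$ is distributed as $\CC\calN(B_i,|\Per(M_{A_i})|^2)$ with $|\Per(M_{A_i})|\ge\lambda'$. By the same shift argument (the disk centred at the mean has maximal mass) together with \cref{prop:prob_z_circle},
\[
\Pr\bigl(|\Per(M_{A_i'})|<\lambda'/n^2\bigr)\le 1-e^{-1/n^4}\le n^{-4}.
\]
One could equally apply \cref{littlewoodinequality} with $k=1$ and $r=n^{-2}$.

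\textbf{Combining across minors.} The events $G_i:=\{A_i' \text{ is } \lambda'/n^2\text{-heavy}\}$ are dependent, since all of them use the same exposed row. \cref{lemma:first_moment} needs no independence: with $\delta=n^{-4}$ and $c=1/2$ it shows that at least $N/2\ge\lfloor N/12\rfloor$ of the $G_i$ hold with probability at least $1-2n^{-4}=1-n^{-\Omega(1)}$. The surviving $A_i'$ witness $F_{j-1,\lfloor N/12\rfloor,\lambda'/n^2}$, which proves the lemma.

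\textbf{Main obstacle.} The only step I expect to need care is bookkeeping: the heavy minors must be indexed by the correct row count ($n-j$ rows growing to $n-j+1$), and the choice of $c_i$ must not depend on the new row, so that complement-disjointness survives. The generous constants ($1/12$, $N\ge n^{0.5}$) suggest the paper's proof instead uses a concentration argument via McDiarmid (\cref{lemma:McDiarmid}) on the count of successes, as in Tao and Vu's Bernoulli setting. If the first-moment route turned out insufficient for how the lemma is later iterated (e.g.\ if failure probabilities must be summed over $L$ steps), I would fall back on McDiarmid. That fallback requires truncating the Gaussian row entries to a high-probability bounded event so that the Lipschitz constants are controlled. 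For this single step, though, the Gaussian tail bound above already gives the required $1-n^{-\Omega(1)}$.
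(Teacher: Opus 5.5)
Your argument is correct, and it takes a genuinely different and much shorter route than the paper.

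\textbf{The paper's route.} The paper keeps Tao and Vu's template. It picks an arbitrary child $B_i$ of each $A_i$ and splits into two cases.
In the ``good'' case, at least half of the $B_i$ have at least $T=\lfloor n^{0.1}\rfloor$ parents that are $\lambda'/n^2$-heavy. This is handled by \cref{littlewoodinequality} with $k=T$, $r=1$, together with the First Moment Lemma.
In the ``bad'' case, a degree-counting argument yields $|I'|\ge N/4$ bad children whose column $h_i$ carries a $\lambda'/n^2$-heavy coefficient in at most $2T$ of the expansions. The paper then sets $Y_i=\min\{|\Per M_{B_i}|/\lambda',1\}$, uses only the weak bound $\mathbf{E}\,Y_i\ge 1/e$ from \cref{lemma:large_parent_has_large_child}, and applies \cref{lemma:McDiarmid} after truncating the row entries at modulus $n$.
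The hypothesis $N\ge n^{0.5}$ and the constant $1/12$ come from that concentration step.

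\textbf{Your route.} You use the fact that the coefficient of $m_{n-j+1,c_i}$ is $\Per(M_{A_i})$ itself. So, conditionally on the rest of the row, $\Per(M_{A_i'})$ has density at most $1/(\pi\lambda'^2)$, and a drop by a factor $n^2$ has probability at most $n^{-4}$.
Such a small-ball bound fails for $\pm1$ entries, where $B_i$ can cancel $\pm\Per(M_{A_i})$ exactly with probability $1/2$. That is why Tao and Vu needed the good/bad dichotomy.
In the Gaussian setting your bound makes the dichotomy unnecessary. It also removes the truncation subtleties of applying McDiarmid to unbounded variables and the lower bound on $N$.

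\textbf{A further simplification.} Your conclusion can be strengthened. The complements are disjoint sets of size $j\ge 2$, so $N\le n/2$. A plain union bound then gives $F_{j-1,N,\lambda'/n^2}$, with no loss in $N$, with probability at least $1-n^{-3}$.
Iterating over $L=O(\log n)$ steps costs only $O(n^{-3}\log n)$. So the McDiarmid fallback you contemplate is never needed.

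\textbf{Comparison.} The paper's argument has the merit of transferring to entry distributions without a bounded density. For this Gaussian lemma, that generality is not needed.
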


\begin{proof}
Fix $j, N$. We condition on $M^{(n - j)}$ so that $F_{j, N, \lambda'}$ hold.
Thus we can find $\lambda'$-heavy sets $A_1, \ldots, A_N \in \binom{[n]}{n-j}$ with disjoint complements, which we now fix. For each
$A_i$, we arbitrarily choose a child $B_i = A_i  \cup \{ h_i \} \in \binom{[n]}{n-j+1}$. By construction, the $B_1, \ldots, B_N$ also have
disjoint complements and the $h_i$ are different.

Let $T := \lfloor n^{0.1} \rfloor$. Call a child $B_i$ good if it has at
least $T$ $\lambda' / n^2$-heavy parents of which $A_i$ will be one of them,
and bad otherwise. There are two cases to consider.

\textbf{Case 1: At least half of the $B_i$ are good.} For this first case, it is assumed that at least half of the $B_i$'s are
good. Apply the co-factor expansion and the \cref{littlewoodinequality}
with $\lambda = \frac{\lambda'}{n^2}, r = 1, m = n - j, k = T$ and it can be concluded that each good $B_i$ is NOT $\lambda' / n^2$-heavy
with probability at most $O ( \frac{1}{T} )$,

So each good $B_i$ is $\lambda' / n^2$-heavy with probability at least $1 - O
( \frac{1}{T} )$. Among the good $B_i$, apply the first moment \cref{lemma:first_moment} with $  \delta = O ( \frac{1}{T} ), m = \frac{N}{2}, c = \frac{1}{6}$, then one gets
\begin{equation}
  \Pr(\#\{B_i : \lambda'/n^2\text{-heavy}\} \ge \lfloor N/12 \rfloor) \ge 1 - n^{-\Omega(1)} \, .
\end{equation}
This proves the claim.

\textbf{Case 2: At least half of the $B_i$ are bad.}  Let $I$ be the set of all $i$ such that
$B_i$ is bad and $H$ be the set of $h_i$, $i \in I$. Draw a bipartite graph
$G$ between $I$ and $H$ by connecting $i$ to $h_{i'}$ if and only if $B_i -
\{ h_{i'} \}$ is $\lambda' / n^2$-heavy. By definition, $B_i - \{ h_{i'} \}$ is a parent of $B_i$. Since each $B_i$
are presumed bad, each $i \in I$ has degree at most $T$. Thus, $\sum_{i \in I} \deg (h_i) =\# \mathrm{edges} \le | I | T $.
Define $I' = \{ i \in I \mid \deg (h_i) \le 2 T \}$. Since every $i \in I \setminus I'$ contributes more than $2T$ to $\sum_{i \in I} \deg(h_i)$, we have $|I|T \ge 2T(|I| - |I'|)$, hence $|I'| \ge |I|/2 \ge N/4$.

Following Ref.~\cite{tao2009permanent}, we condition on the entries of the $n - j + 1$ row not in the columns
determined by $I'$, For each $i \in I'$, let
$  Y_i := \min\{ \frac{| \Per M_{B_i} |}{\lambda'}, 1 \} $. By \cref{lemma:large_parent_has_large_child}, $\Pr ( \left| \Per ( M_{A \cup \{ i \}} )
\right| \ge | \Per (M_A) | ) \ge \frac{1}{e}$, and each
$A_i$ has been assumed to be $\lambda'$-heavy, so $\Pr ( | \Per
(M_{B_i}) | \ge \lambda' ) \ge \frac{1}{e}$. Therefore,
\begin{equation}
    \mathbf{E}(Y_i) \ge \Pr ( | \Per (M_{B_i}) | \ge \lambda' ) \ge \tfrac{1}{e} \, .
\end{equation}
Define $Y= \sum_{i \in I'} Y_i$, then by the linearity of expectation, $\mathbf{E} (Y) \ge \frac{| I'
|}{e} \ge  \frac{N}{4 e} $.

The goal is to apply the McDiarmid inequality (\cref{lemma:McDiarmid}) to bound the probability of $\Pr(|Y-\mathbf{E}(Y)|\ge |I'|/1000)$. To this end, we estimate the effect of each random entry $a_{n - j + 1, h}$ on $Y$. This estimation is new compared to Ref.~\cite{tao2009permanent}.

Let $\{ i_l \}_{l = 1}^{| I' |}$ be an enumeration of $I'$, and set $x_l := a_{n - j + 1, h_{i_l}}$. Since $x_l \sim \mathcal{N}(0, 1)_\mathbb{C}$, a union bound gives $\Pr(\forall l,\, |x_l| \le n) \ge 1 - |I'| e^{-n^2} = 1 - o(1)$.

Conditioned that all $| x_l | \le n$, we will prove that
\begin{multline}
  | Y (x_1, x_2, \ldots, x_l', \ldots, x_{| I' |}) \\
  - Y (x_1, x_2, \ldots, x_l, \ldots, x_{| I' |}) | \le 3 T \, .
\end{multline}
Each coordinate $a_{n-j+1,h}$ affects $Y_i$ only when $h \in B_i$, with change at most $1$ (conditioned on $|a_{n-j+1,h}| < n$, the bound is $2/n$ when the corresponding $(n-j)$-minor is not $\lambda'/n^2$-heavy). By the definition of $I'$, at most $2T$ indices $i \in I'$ satisfy $h \in B_i$, so $|Y' - Y| \le 2T + (2/n)\cdot n \le 3T$. Applying \cref{lemma:McDiarmid} yields
\begin{equation}
  \Pr (| Y - \mathbf{E} (Y) | \ge | I' | / 1000) \le 2 e^{-\Omega(|I'|/T^2)} = n^{-\Omega(1)} \, ,
\end{equation}
where we have used the assumption that $N \ge n^{0.5} \ge T^2 n^{0.1}$.

Combined with $\mathbf{E}(Y) \ge N/(4e)$, this gives $\Pr(Y \ge N/11) \ge 1 - n^{-\Omega(1)}$.
Conditioned on $Y \ge N / 11$, suppose there are $z$ indices satisfying $\frac{|
\Per M_{B_i} |}{\lambda'} \ge \frac{1}{n^2}$. Then there are $| I'
| - z$ indices satisfying $\frac{| \Per M_{B_i} |}{\lambda'} < \frac{1}{n^2}$, this
means,
\begin{equation}
    \frac{N}{11} \le Y \le z \times 1 + (| I' |
   - z)  \frac{1}{n^2} \le z +
  \frac{N}{ n^2} \, ,
\end{equation}
which implies $z \ge \frac{N}{12}$. This
concludes the proof of \cref{iteration}.
\end{proof}

Iterating \cref{iteration} for $L \le \frac{\log n}{100}$ steps as in Ref.~\cite{tao2009permanent}
starting with $\Pr ( F_{L, \left\lfloor  \epsilon n / 10 L
\right\rfloor, \lambda} ) = 1 - \exp ( - \Omega (L) ) = 1 -
n^{- \Omega (1)}$, we conclude that
\begin{equation}
  \Pr ( F_{1, \lfloor n^{0.5} \rfloor, n^{- \log n}  \lambda} )
  \ge 1 - n^{- \Omega (1)}
\end{equation}
Now suppose that $F_{1, \lfloor n^{0.5} \rfloor, n^{- \log n} \lambda}$
holds, thus there are at least $\lfloor n^{0.5} \rfloor$ $(n^{- \log n}
\lambda)$-heavy minors in $\binom{[n]}{n-1}$. Applying the co-factor expansion and \cref{littlewoodinequality}, we conclude that $[n]$ is $n^{- \log n} 
\lambda$-heavy with probability at least $1 - O \left( \frac{1}{n^{0.5}} \right) = 1 - n^{- \Omega (1)}$. This concludes the proof of the endgame (\cref{endgame}).

\bibliographystyle{IEEEtran}
\bibliography{references.bib}

\end{document}